\newcommand{\major}{\cellcolor{blue!100}}
\newcommand{\minor}{\cellcolor{blue!25}}
\newcommand{\Abs}[1]{\left\vert #1 \right\vert}
\newcommand{\norm}[1]{\left\Vert #1 \right\Vert}
\newcommand{\R}{\mathbb{R}}
\newcommand{\etal}{\textit{et al.}}
\DeclareMathOperator{\im}{Im}
\DeclareMathOperator{\re}{Re}
\DeclareMathOperator{\erf}{erf}
\DeclareMathOperator{\Dom}{Dom}
\newtheorem{lemma}{Lemma}
\theoremstyle{definition}
\theoremstyle{remark}
\title
[
	Multiresolution of free-particle propagator 
]
{
    Multiresolution of the one dimensional free-particle propagator. Part 1: Construction
}
\author[Dinvay]{Evgueni Dinvay}
\author[Zabelina]{Yuliya Zabelina}
\author[Frediani]{Luca Frediani}
\email{ evgueni.dinvay@uit.no }
\email{ jl.zabelina@gmail.com }
\email{ luca.frediani@uit.no }
\address
{
    Department of Chemistry
    \\
    UiT The Arctic University of Norway
    \\
    PO Box 6050 Langnes
    \\
    N-9037 Tromsø
    \\
    Norway
}
\date{\today}
\begin{document}

\begin{abstract}
The free-particle propagator,
a key operator in various algorithms for simulating the time evolution of the Schrödinger equation,
is studied.
A multiscale approximation of this propagator is constructed,
representing the semigroup associated with the free-particle Schrödinger operator in a multiwavelet basis.
This representation involves integrals of highly oscillatory functions.
These integrals are efficiently discretized using a contour deformation technique,
which addresses the challenges posed by earlier discretization methods. 
\end{abstract}

\keywords{
}
\maketitle
\section{Introduction}
\setcounter{equation}{0}


Let us consider the one-dimensional time-dependent Schrödinger equation
\begin{equation}
\label{general_schrodinger}
    i \partial_t u = - \partial_x^2 u + V(x, t) u
    ,
\end{equation}
complemented by the initial condition given at $t = 0$ by
\begin{equation}
\label{general_initial_data}
    u(0) = u_0 \in L^2(\mathbb R)
    .
\end{equation}
The space variable $x \in \mathbb R$ and the time variable $t > 0$.
In general a real-valued potential $V(x, t)$ can depend on the solution,
as for example, in water waves and Hartree-Fock theories.

The main objective of this work is to provide an effective
numerical discretization of the exponential operator
\(
    \exp \left( i t \partial_x^2 \right)
    ,
\)
where $t \in \mathbb R$ is a small parameter, not necessarily positive.
In what follows we assume that $t \ne 0$.
Our goal is to construct a numerical representation of this operator
to solve Equation \eqref{general_schrodinger} adaptively.
It is known that a use of multiwavelet bases \cite{Alpert1993}
may lead to adaptive numerical solutions for some
partial differential equations \cite{Alpert_Beylkin_Gines_Vozovoi2002}.
However, the time-evolution operator
\(
    \exp \left( i t \partial_x^2 \right)
\)
falls outside this operator class,
that are known to give sparse representations in this type of bases
\cite{Beylkin_Keiser1997, Fann_Beylkin_Harrison_Jordan2004}.
Nevertheless, it turns out that
\(
    \exp \left( i t \partial_x^2 \right)
\)
exhibits a unique peculiarity that can be leveraged in adaptive calculations,
as we demonstrate below.

We recall that the $t$-dependent exponential operator under consideration here
forms a semigroup
representing solutions of the free particle equation
\begin{equation}\label{free_particle_schrodinger}
    i \partial_t u + \partial_x^2 u = 0
    .
\end{equation}
In other words, for any square integrable $u_0 \in L^2(\R)$,
the function
\(
    u
    =
    \exp \left( i t \partial_x^2 \right)
    u_0
\)
solves Equation~\ref{free_particle_schrodinger}.
In numerical simulations, the parameter $t$ usually plays the role of a time (sub)step.
The importance of an effective discretization of this exponential operator,
at least for a time-independent potential $V$,
can be exemplified as follows.
Let $A, B$ be time independent linear operators.
One of the simplest second-order splitting methods can be expressed as follows
\begin{equation}
\label{leap_frog_integrator}
    e^{ At + Bt } = e^{ Bt/2 } e^{ At } e^{ Bt/2 } + \mathcal O \left( t^3 \right)
    .
\end{equation}
Taking $A = i \partial_x^2$ and $B = -i V$,
one can evolve Equation \eqref{general_schrodinger}
by applying the multiplication operator $e^{ Bt/2 }$ twice
and the pseudo-differential operator $e^{ At }$ once at each time step
starting from the initial function \eqref{general_initial_data}.
Such operator splitting simplifies the problem
\eqref{general_schrodinger}, \eqref{general_initial_data} significantly:
the exponential of the multiplicative operator is straightforward and the problem reduces to
finding an efficient representation for the exponential of the kinetic energy.
Below we will also make use of higher order schemes than \eqref{leap_frog_integrator}.
They will permit larger time steps,
which turns out to be especially crucial for multiwavelet representation
of the semigroup
\(
    \exp \left( i t \partial_x^2 \right)
    ,
\)
due to some limitations on how small a time step $t$ might be taken.

We are motivated by the desire of simulating attosecond electron dynamics.
In recent years there has been a growing interest in studying the electronic structure
of molecules under strong and ultra-fast laser pulses
\cite{Nisoli_Decleva_Calegari_Palacios_Martin}.
This is clearly testified by the 2018 and 2023 Nobel prizes in Physics\cite{Mourou_nobel_lecture2018, Strickland_nobel_lecture2018}. This could potentially lead to reaction control by attosecond laser pulses in the future.
From a computational perspective, electron dynamics in atoms and molecules is challenging:
standard atomic orbital approaches routinely used in Quantum Chemistry cannot be
applied in strong laser pulses, as simple numerical experiments
suggest that ionization plays a significant role and cannot be neglected \cite{Coccia_Luppi2021}.
In other words, the coupling with continuum or unbound states complicates the use of atomic orbitals
for the problem discretization.
Inevitably it demands the use of grid based methods, such as finite elements, for example.

In the case of hydrogen atom subjected to a laser field we have
the following three-dimensional Schrödinger equation
\begin{equation}
\label{hydrogen_schrodinger}
	i \partial_t \psi
	=
	- \frac 12 \Delta \psi
	- \frac 1{|x|} \psi
	+ \mathcal E(x, t) \psi
	,
\end{equation}
where the second term $- |x|^{-1}$ stands for the Coulomb potential
and $\mathcal E(x, t)$ stands for the external electric field generated by a laser.
It is normally complemented by the ground state
\(
	\psi_0(x) = e^{-|x|} / \sqrt{\pi}
\)
serving as an initial condition $\psi(x, 0) = \psi_0(x)$.

It can be advantageous to make use of adaptive methods to describe
the singular Coulomb interaction between the electron and the nucleus, as well as
the cusp in the eigenfunctions which follows from it.
This, however, sets some limitations on the method development.
Additionally, for many-body systems (atoms and molecules),
the potential becomes nonlinear due to the presence of electron-electron interactions.
The corresponding nonlinearities are normally presented in equations as singular integral convolutions. 
One such method is constituted by a \ac{MW} representation,
within the framework of multiresolution analysis \cite{Alpert_Beylkin_Gines_Vozovoi2002}.
This approach has demonstrated excellent results in achieving fast algorithms and high precision
for static quantum chemistry problems
\cite{Harrison_Fann_Yanai_Gan_Beylkin2004, Jensen_Saha_elephant2017, Frediani_Fossgaard_Fla_Ruud}.
However, discussing these static results is beyond the scope of the current contribution.
Furthermore, the research field is rapidly evolving,
and we recommend referring to a comprehensive review paper for a broader perspective \cite{Bischoff2019}.
An overview of the most recent advancements will also be available in an upcoming paper
\cite{Tantardini_Dinvay_Pitteloud_Gerez_Jensen_review2024}.

We aim to extend the multiwavelet approach to address time-dependent scenarios.
The first step in integrating the generic time-dependent Schrödinger equation,
similar to \eqref{hydrogen_schrodinger},
involves discretizing the semigroup $e^{i t \Delta}$.
Due to the limitations highlighted above,
the fast Fourier transform (FFT) cannot be relied upon for quantum chemistry applications.
This is why we find the multiscale approach promising.
To our knowledge, the only work on multiwavelet representation of the dynamical Shr\"odinger type
equations was conducted by Vence~\etal~\cite{Vence_Harrison_Krstic2012}.
They view the semigroup as a convolution
and mitigate the errors arising from its oscillatory behavior
by damping high-frequency oscillations in Fourier space.
The resulting kernel approximation can be seamlessly integrated into
the multiwavelet framework developed for static problems.

In contrast, we adopt a different approach
by leveraging the smoothing property of the semigroup instead of damping it.
It enables us to achieve a highly accurate multiscale approximation
of the free-particle propagator.
Moreover, it turns out that
this representation can be regarded as sparse,
as long as particular restrictions are met during numerical simulations.
Since we do not treat the semigroup as a convolution suitable for chemistry applications,
unlike the mentioned above method employed by Vence~\etal,
further development is necessary
to fully exploit our representation for studying electron dynamics in molecules.
Therefore, this article focuses exclusively on the one-dimensional case.
In the future, we plan to extend our results and simulate the three-dimensional
quantum mechanical problems.
It is here difficult to predict if a multiresolution representation of oscillatory operators such as the free particle propagator considered here
is going to be practical in 3D, due to its computational cost.
However, we demonstrate below a significant improvement of the results by Vence~\etal, at least in 1D.
Therefore, we believe that our approach will show an efficiency
comparable to other grid methods for small molecules in strong laser fields.

The paper is organised as follows.
In Section~\ref{Multiwavelet_bases_section} we introduce some preliminary notions and
recall the theory behind multiwavelet bases.
In Section~\ref{Review_on_time_evolution_operators_in_multiwavelet_bases}
we lay out the problem with some insights into the difficulties that we aim to overcome.
In Section~\ref{Contour_deformation_technique_section}, the multiresolution representation of
\(
    \exp \left( i t \partial_x^2 \right)
\)
is reduced to the evaluation of a special kind of integrals.
In Section~\ref{Contour_deformation_technique_section}
these integrals are evaluated by deforming the integration contour in a way that allows to exploit the smoothing property of this exponential operator.
Furthermore, we demonstrate that by carefully selecting the time parameter $t$
and polynomial approximation order,
its multiresolution representation will display an unusual sparsity pattern.
More specifically, the matrices used in calculations have
non-zero elements away from their main diagonals.
This sparsity depends significantly on the value of $t$,
necessitating the use of high-order numerical schemes for accurate wave propagation simulations.
To address this requirement, we introduce high-order symplectic integrators
in the second part of this work \cite{Dinvay2024}.

\section{Multiwavelet bases}\label{Multiwavelet_bases_section}
\setcounter{equation}{0}

This section provides a concise theoretical background on multiwavelet bases.
We are interested in approximations of functions in $L^2(\mathbb R)$.
However, for practical reasons, we have to restrict ourselves to
a sufficiently large interval.
Equation \eqref{general_schrodinger} can always be normalised in such a way
that the unit computational domain $[0, 1]$ will suit our needs,
while keeping the constant in front of the second derivative to be $-1$.
Note that any function $u \in L^2(0, 1)$ can be trivially extended
to a function in $L^2(\mathbb R)$, by setting it to zero outside $(0, 1)$.
On the contrary, any function from $L^2(\mathbb R)$ can be projected into
$L^2(0, 1)$, by multiplying it with the characteristic function of
the unit interval $(0, 1)$.
This remark allows us to make sense of the inclusion
\(
    L^2(0, 1) \subset L^2(\mathbb R)
    .
\)
Now for any
\(
    \mathfrak k \in \mathbb N
    = \{ 1, 2, \ldots\}
\)
and any
\(
    n \in \mathbb N_0
    = \{ 0, 1, 2, \ldots\}
\)
we introduce a space of piecewise polynomials $V_n^{\mathfrak k}$ as follows.
A function $f \in V_n^{\mathfrak k}$ provided that on each dyadic interval
\(
    (l / 2^n, (l + 1) / 2^n)
\)
with
\(
    l = 0, 1, \ldots, 2^n - 1
\)
it is a polynomial of degree less than $\mathfrak k$
and it is zero elsewhere.
Note that the space $V_n^{\mathfrak k}$ has dimension $2^n \mathfrak k$ and
\[
    V_0^{\mathfrak k}
    \subset
    V_1^{\mathfrak k}
    \subset
    \ldots
    \subset
    V_n^{\mathfrak k}
    \subset
    \ldots
    \subset
    L^2(0, 1) \subset L^2(\mathbb R)
    ,
\]
which defines a multiresolution analysis (MRA) \cite{Mallat_1989b, Mallat_1989a},
since the union
\(
    \bigcup_{n = 0}^{\infty} V_n^{\mathfrak k}
\)
is dense in $L^2(0, 1)$.
We refer to the given fixed number $\mathfrak k$ as the order of MRA and to
the integer variable $n$ as a scaling level.

Let $\phi_0, \ldots, \phi_{\mathfrak k - 1}$ be a basis of $V_0^{\mathfrak k}$
that will be called scaling functions.
The space $V_n^{\mathfrak k}$ is spanned
by $2^n \mathfrak k$ functions which are obtained from the scaling functions by
dilation and translation
\begin{equation}
\label{scaling_dilation_translation}
    \phi_{jl}^n(x)
    =
    2^{n/2}
    \phi_j( 2^n x - l)
    , \quad j = 0, \ldots, \mathfrak k - 1
    , \quad l = 0, \ldots, 2^n - 1
    .
\end{equation}
There is some freedom in choosing a basis in $V_0^{\mathfrak k}$,
at least for $\mathfrak k > 1$.
The first example appeared in \cite{Alpert1993}, and it consists of the Legendre scaling functions
\begin{equation}
\label{Legendre_scaling_functions}
    \phi_j(x)
    =
    \sqrt{ 2j + 1 } P_j(2x - 1)
    , \quad
    x \in (0, 1)
    , \quad
    j = 0, \ldots, \mathfrak k - 1
    ,
\end{equation}
where $P_j$ are standard Legendre polynomials.
Outside the unit interval $(0, 1)$ they are set to zero.
An alternative basis, proved to be more efficient
that is especially crucial in multi-dimensional numerical calculations,
was presented in \cite{Alpert_Beylkin_Gines_Vozovoi2002}.
It consists of the interpolating scaling functions
\begin{equation}
\label{interpolating_scaling_functions}
    \varphi_j(x)
    =
    \sqrt{ w_j }
    \sum_{m = 0}^{\mathfrak k - 1}
    \phi_m(x_j) \phi_m(x)
    , \quad
    j = 0, \ldots, \mathfrak k - 1
    ,
\end{equation}
where $\phi_m$ are the Legendre scaling functions \eqref{Legendre_scaling_functions}.
Here $x_0, \ldots, x_{\mathfrak k - 1}$ denote the roots of $P_{\mathfrak k}(2x - 1)$
and the quadrature weights
\(
    w_j
    =
    1 /
    (
        \mathfrak k P_{\mathfrak k}'(2x_j - 1) P_{\mathfrak k - 1}(2x_j - 1)
    )
    .
\)
The expansion coefficients of a general function $f$ in a scaling basis
are the integrals
\begin{equation}
\label{scaling_coefficients}
    s_{jl}^n(f)
    =
    \int_{\mathbb R}
    f(x) \phi_{jl}^n(x)
    dx
    =
    2^{-n/2}
    \int_0^1
    f( 2^{-n} (x + l) ) \phi_j(x)
    dx
    ,
\end{equation}
that can be evaluated numerically using
the Gauss-Legendre quadrature
\begin{equation}
\label{scaling_coefficients_quadrature}
    s_{jl}^n(f)
    \approx
    2^{-n/2}
    \sum_{m = 0}^{\mathfrak k - 1}
    w_m
    f( 2^{-n} (x_m + l) ) \phi_j(x_m)
    , \quad j = 0, \ldots, \mathfrak k - 1
    , \quad l = 0, \ldots, 2^n - 1
    .
\end{equation}
Note that the interpolating scaling functions \eqref{interpolating_scaling_functions}
satisfy
\(
    \varphi_j(x_m)
    =
    w_j^{-1/2} \delta_{jm}
\)
and so \eqref{scaling_coefficients_quadrature} simplifies to
\begin{equation}
\label{interpolating_scaling_coefficients_quadrature}
    s_{jl}^n(f)
    \approx
    2^{-n/2}
    \sqrt{w_j}
    f( 2^{-n} (x_m + l) )
\end{equation}
for this particular choice of a basis.

The multiwavelet space $W_n^{\mathfrak k}$ is defined as
the orthogonal complement of $V_n^{\mathfrak k}$ in $V_{n + 1}^{\mathfrak k}$,
so we have
\[
    V_{n + 1}^{\mathfrak k}
    =
    V_n^{\mathfrak k}
    \oplus
    W_n^{\mathfrak k}
    =
    V_0^{\mathfrak k}
    \oplus
    W_0^{\mathfrak k}
    \oplus
    W_1^{\mathfrak k}
    \oplus
    \ldots
    \oplus
    W_n^{\mathfrak k}
    .
\]
As above it is enough to define a basis
$\psi_0, \ldots, \psi_{\mathfrak k - 1}$ in $W_0^{\mathfrak k}$
and then transform it in line with the general rule \eqref{scaling_dilation_translation},
in order to get an orthonormal basis in each $W_n^{\mathfrak k}$.
The construction of a specific multiwavelet basis
that satisfies certain given restrictions is more involved.
We are using the one that was constructed in \cite{Alpert1993},
since it provides us with an additional vanishing moment property,
namely,
\begin{equation}
\label{wavelet_vanishing_moment}
    \int_0^1 \psi_j(x) x^m dx = 0
    , \quad
    m = 0, 1, \ldots, j + \mathfrak k - 1
    .
\end{equation}
The multiwavelet expansion coefficients of a general function $f$
are the integrals
\begin{equation}
\label{wavelet_coefficients}
    d_{jl}^n(f)
    =
    \int_{\mathbb R}
    f(x) \psi_{jl}^n(x)
    dx
    ,
\end{equation}
that could in principal be also evaluated with the help of
the Gauss-Legendre quadrature.
However, there are precision issues connected to this approach.
The multiwavelet coefficients \eqref{wavelet_coefficients}
are instead obtained from the higher level scaling coefficients $s_{jl}^{n + 1}(f)$,
where the latter are calculated by the quadrature rule
\eqref{scaling_coefficients_quadrature}.
This is possible thanks to the so called forward wavelet transform \cite{Cohen_1992}
\begin{equation}
\label{forward_wavelet_transform}
    \begin{pmatrix}
        \phi_l^n
        \\
        \psi_l^n
    \end{pmatrix}
    =
    U
    \begin{pmatrix}
        \phi_{2l}^{n + 1}
        \\
        \phi_{2l + 1}^{n + 1}
    \end{pmatrix}
\end{equation}
where the two bases of $V_{n + 1}^{\mathfrak k}$
are grouped in accordance with the short notation agreement
\[
    \phi_l^n
    =
    \begin{pmatrix}
        \phi_{0,l}^n (x)
        \\
        \ldots
        \\
        \phi_{\mathfrak k - 1,l}^n (x)
    \end{pmatrix}
    , \quad
    \psi_l^n
    =
    \begin{pmatrix}
        \psi_{0,l}^n (x)
        \\
        \ldots
        \\
        \psi_{\mathfrak k - 1,l}^n (x)
    \end{pmatrix}
\]
and $U$ stands for the unitary matrix
\begin{equation}
\label{unitary_filter_transformation}
    U
    =
    \begin{pmatrix}
        H^{(0)}
        &
        H^{(1)}
        \\
        G^{(0)}
        &
        G^{(1)}
    \end{pmatrix}
\end{equation}
consisting of $\mathfrak k \times \mathfrak k$-size filter blocks
\(
    H^{(0)}
    =
    \left(
        h_{pj}^{(0)}
    \right)
    , \ldots,
    G^{(1)}
    =
    \left(
        g_{pj}^{(1)}
    \right)
\)
depending only on the type of scaling basis (Legendre or interpolating) in use
\cite{Alpert_Beylkin_Gines_Vozovoi2002}.
Integrating \eqref{forward_wavelet_transform}
together with $f$, we immediately obtain the following relation
\begin{equation}
\label{decomposition_step}
    \begin{pmatrix}
        s_l^n
        \\
        d_l^n
    \end{pmatrix}
    =
    U
    \begin{pmatrix}
        s_{2l}^{n + 1}
        \\
        s_{2l + 1}^{n + 1}
    \end{pmatrix}
\end{equation}
with vectors $s_l^n$, $d_l^n$ consisting of coefficients 
$s_{jl}^n$, $d_{jl}^n$, $j = 0, \ldots, \mathfrak k - 1$, accordingly.

There are two main advantages with multiwavelets, compared to other discretization methods: (1)
the disjont support of the basis functions combined with the vanishing moments of the wavelets enables sparse and adaptive function representations, with a rigorous precision control; (2) some pseudo-differential operators  have sparse multiwavelet representations \cite{Alpert1993}, leading to fast (ideally lienarly scaling) algorithms.

Let $P^n, Q^n$ be the orthogonal projectors of $L^2(\mathbb R)$
into $V_n^{\mathfrak k}, W_n^{\mathfrak k}$, respectively.
Clearly,
\(
    Q^n = P^{n + 1} - P^n
    .
\)
Moreover, $P^n$ converges strongly to the projection
\(
    P : L^2(\mathbb R) \to L^2(0, 1)
    ,
\)
whereas $Q^n$ tends to zero.
Then any bounded operator $\mathcal T$ can be represented on the computational
domain $[0, 1]$ as
\begin{equation}
\label{nonstandard_form_expansion}
    P \mathcal T P
    =
    \lim_{n \to \infty} \mathcal T_n
    =
    \mathcal T_0
    +
    \sum_{n = 0}^{\infty}
    \left(
        \mathcal A_n + \mathcal B_n + \mathcal C_n
    \right)
    ,
\end{equation}
where we introduced the multiresolution restrictions
\begin{equation}
\label{nonstandard_form_restrictions}
    \mathcal A_n
    =
    Q^n \mathcal T Q^n
    , \quad
    \mathcal B_n
    =
    Q^n \mathcal T P^n
    , \quad
    \mathcal C_n
    =
    P^n \mathcal T Q^n
    , \quad
    \mathcal T_n
    =
    P^n \mathcal T P^n
    .
\end{equation}
In other words, each approximation $\mathcal T_n$ of $\mathcal T$ on $[0, 1]$
is defined by its restriction $\mathcal T_0$ in $V_0^{\mathfrak k}$
and by the collection of triplets
\(
    ( \mathcal A_0, \mathcal B_0, \mathcal C_0 )
    ,
    \ldots
    ,
    ( \mathcal A_n, \mathcal B_n, \mathcal C_n )
    .
\)
We refer to this representation as the nonstandard form and write shortly
\begin{equation}
\label{nonstandard_form}
    \mathcal T_n
    \cong
    \{
        \mathcal T_0
        ,
        ( \mathcal A_0, \mathcal B_0, \mathcal C_0 )
        ,
        \ldots
        ,
        ( \mathcal A_n, \mathcal B_n, \mathcal C_n )
    \}
    .
\end{equation}
This form is very useful for some classes of operators.
In particular,
Calder\'on-Zygmund operators have sparse representations for the matrices 
associated with $\mathcal A_n, \mathcal B_n, \mathcal C_n$,
leading to fast effective algorithms \cite{Beylkin_Coifman_Rokhlin}.
Another advantage of the nonstandard form is
the absence of coupling between scales when the operator is applied
(such a coupling results from a post-processing step
which relies only on the fast multiwavelet transform
\cite{Beylkin_Coifman_Rokhlin, Frediani_Fossgaard_Fla_Ruud}).
This is a key-feature as one wants to preserve the adaptivity of functions while the operator is applied.

Before proceeding to concrete examples of operators $\mathcal T$,
we introduce the following matrices, defined in the 
\(
    V_n^{\mathfrak k}
    \oplus
    W_n^{\mathfrak k}
    \) 
basis, as
\begin{equation}
\label{nonstandard_form_matrices}
\begin{aligned}
    \left[ \sigma_{l'l}^n \right]_{j'j}
    =
    \int_0^1
    \phi_{j'l'}^n(x) \mathcal T \phi_{jl}^n(x)
    dx
    , \qquad
    &
    \left[ \gamma_{l'l}^n \right]_{j'j}
    =
    \int_0^1
    \phi_{j'l'}^n(x) \mathcal T \psi_{jl}^n(x)
    dx
    ,
    \\
    \left[ \beta_{l'l}^n \right]_{j'j}
    =
    \int_0^1
    \psi_{j'l'}^n(x) \mathcal T \phi_{jl}^n(x)
    dx
    , \qquad
    &
    \left[ \alpha_{l'l}^n \right]_{j'j}
    =
    \int_0^1
    \psi_{j'l'}^n(x) \mathcal T \psi_{jl}^n(x)
    dx
    .
\end{aligned}
\end{equation}
For every scale $n \in \mathbb N_0$,
symbol $\sigma^n$ stands for a table of $2^n \times 2^n$-size,
where each element is itself a matrix of $\mathfrak k \times \mathfrak k$-size.
Similarly, the blocks
\(
    \alpha^n, \beta^n
\)
and $\gamma^n$,
associated with the triple 
\(
    ( \mathcal A_n, \mathcal B_n, \mathcal C_n )
    ,
\)
have the same forms.
Making use of the wavelet transform \eqref{forward_wavelet_transform}
one can easily deduce the following relation
\begin{equation}
\label{nonstandard_form_decomposition_step}
    \begin{pmatrix}
        \sigma_{l'l}^n
        &
        \gamma_{l'l}^n
        \\
        \beta_{l'l}^n
        &
        \alpha_{l'l}^n
    \end{pmatrix}
    =
    U
    \begin{pmatrix}
        \sigma_{2l', 2l}^{n + 1}
        &
        \sigma_{2l', 2l + 1}^{n + 1}
        \\
        \sigma_{2l' + 1, 2l}^{n + 1}
        &
        \sigma_{2l' + 1, 2l + 1}^{n + 1}
    \end{pmatrix}
    U^T
    .
\end{equation}

Finally,
we can describe a practical algorithm for the operator application
in the nonstandard form.
Let us for a given arbitrary $f \in L^2(0, 1)$ consider
the sequences
\(
    \widetilde d_l^n
    ,
    \widetilde s_l^n
    ,
    \widehat s_l^n
\)
with
\(
    n \in \mathbb N_0
    ,
    l = 0, 1, \ldots, 2^n - 1
\)
of elements from $\mathbb C^{\mathfrak k}$ defined by the following iterative procedure
\[
    \widetilde d_{l'}^n
    =
    \sum_{l = 0}^{2^n - 1}
    \left(
        \alpha_{l'l}^n
        d_l^n(f)
        +
        \beta_{l'l}^n
        s_l^n(f)
    \right)
    , \quad
    \widetilde s_{l'}^n
    =
    \sum_{l = 0}^{2^n - 1}
    \gamma_{l'l}^n
    d_l^n(f)
\]
and
\begin{equation*}
    \begin{aligned}
        \widehat s_{2m}^n
        =
        H^{(0)}
        \left(
            \widehat s_m^{n - 1}
            +
            \widetilde s_m^{n - 1}
        \right)
        +
        G^{(0)}
        \widetilde d_m^{n - 1}
        \\
        \widehat s_{2m + 1}^n
        =
        H^{(1)}
        \left(
            \widehat s_m^{n - 1}
            +
            \widetilde s_m^{n - 1}
        \right)
        +
        G^{(1)}
        \widetilde d_m^{n - 1}
    \end{aligned}
    , \quad
    n \geqslant 1
    , \quad
    \text{where }
    \widehat s_0^0
    =
    \sigma_{00}^0
    s_0^0(f)
    .
\end{equation*}

\begin{lemma}
    Let $f \in L^2(0, 1)$, $\mathfrak n \in \mathbb N_0$
    and $g = \mathcal T_{\mathfrak n + 1} f$.
    Then its coefficients can be obtained by the above iterative procedure
    as
    \begin{equation}
    \label{operator_application}
        d_l^{\mathfrak n}(g)
        =
        \widetilde d_l^{\mathfrak n}
        , \quad
        s_l^{\mathfrak n}(g)
        =
        \widetilde s_l^{\mathfrak n}
        +
        \widehat s_l^{\mathfrak n}
    \end{equation}
    at the finest scale $\mathfrak n$.
    The rest coefficients, at scales $n = 0, \ldots, \mathfrak n - 1$,
    are found by \eqref{decomposition_step}.
\end{lemma}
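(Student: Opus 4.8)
The plan is to identify the three auxiliary sequences $\widetilde d_l^n$, $\widetilde s_l^n$, $\widehat s_l^n$ with honest projections of the individual operator pieces, and then reassemble $g$ from the telescoping identity $\mathcal T_{n+1} = \mathcal T_n + \mathcal A_n + \mathcal B_n + \mathcal C_n$, which follows at once from $V_{n+1}^{\mathfrak k} = V_n^{\mathfrak k}\oplus W_n^{\mathfrak k}$ by expanding $P^{n+1}\mathcal T P^{n+1} = (P^n+Q^n)\mathcal T(P^n+Q^n)$. First I would unwind the matrix definitions \eqref{nonstandard_form_matrices}: writing $Q^n f = \sum_l d_l^n(f)\psi_l^n$ and $P^n f = \sum_l s_l^n(f)\phi_l^n$ and using that $\mathcal A_n f,\mathcal B_n f\in W_n^{\mathfrak k}$ while $\mathcal C_n f\in V_n^{\mathfrak k}$, orthonormality of the bases shows that $\widetilde d_{l'}^n$ is exactly the vector of level-$n$ wavelet coefficients of $(\mathcal A_n+\mathcal B_n)f$ and $\widetilde s_{l'}^n$ the vector of level-$n$ scaling coefficients of $\mathcal C_n f$.

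Next I would compute the genuine coefficients of $g = \mathcal T_0 f + \sum_{n=0}^{\mathfrak n}(\mathcal A_n+\mathcal B_n+\mathcal C_n)f$ using the orthogonality relations $W_n^{\mathfrak k}\perp W_m^{\mathfrak k}$ for $n\neq m$ and $W_n^{\mathfrak k}\perp V_m^{\mathfrak k}$ for $n\geq m$. Pairing against $\psi^{\mathfrak n}\in W_{\mathfrak n}^{\mathfrak k}$ annihilates $\mathcal T_0 f\in V_0^{\mathfrak k}$ and every $\mathcal C_m f\in V_m^{\mathfrak k}$ (since $m\le\mathfrak n$), leaving only $\mathcal A_{\mathfrak n}f+\mathcal B_{\mathfrak n}f$; this gives $d_l^{\mathfrak n}(g)=\widetilde d_l^{\mathfrak n}$, the first identity. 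For the scaling part I would apply $P^{\mathfrak n}$ to the expansion, note $P^{\mathfrak n}\mathcal A_{\mathfrak n}f=P^{\mathfrak n}\mathcal B_{\mathfrak n}f=0$ and $P^{\mathfrak n}$ acts as the identity on all remaining lower-scale pieces, and then collapse the sum back via the same telescoping identity to get $P^{\mathfrak n}g = \mathcal T_{\mathfrak n}f + \mathcal C_{\mathfrak n}f$. Hence $s_l^{\mathfrak n}(g)=s_l^{\mathfrak n}(\mathcal T_{\mathfrak n}f)+\widetilde s_l^{\mathfrak n}$, and it remains to show $\widehat s_l^{\mathfrak n}=s_l^{\mathfrak n}(\mathcal T_{\mathfrak n}f)$.

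The crux is an induction on $n$ proving $\widehat s_l^n=s_l^n(\mathcal T_n f)$. The base case is immediate, since $\widehat s_0^0=\sigma_{00}^0 s_0^0(f)$ is precisely the level-$0$ scaling coefficient of $\mathcal T_0 f=P^0\mathcal T P^0 f$. For the step I would decompose $\mathcal T_n f=(\mathcal T_{n-1}f+\mathcal C_{n-1}f)+(\mathcal A_{n-1}+\mathcal B_{n-1})f$ and observe that the first bracket lies in $V_{n-1}^{\mathfrak k}$ and the second in $W_{n-1}^{\mathfrak k}$, so this is exactly the $V_{n-1}^{\mathfrak k}\oplus W_{n-1}^{\mathfrak k}$ splitting of $\mathcal T_n f$. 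By the induction hypothesis and the first step its level-$(n-1)$ scaling coefficients are $\widehat s_m^{n-1}+\widetilde s_m^{n-1}$ and its level-$(n-1)$ wavelet coefficients are $\widetilde d_m^{n-1}$; applying the reconstruction (inverse of the unitary transform in \eqref{decomposition_step}) then produces the level-$n$ scaling coefficients of $\mathcal T_n f$, which is literally the recursion defining $\widehat s_{2m}^n,\widehat s_{2m+1}^n$. This closes the induction and, at $n=\mathfrak n$, the scaling identity. The coarser coefficients at scales $n<\mathfrak n$ are simply the honest multiwavelet coefficients of the fixed function $g\in V_{\mathfrak n+1}^{\mathfrak k}$, so they are recovered exactly by iterating the decomposition step \eqref{decomposition_step} downward from $s_l^{\mathfrak n}(g)$.

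I expect the inductive identification in the third step to be the main obstacle. One has to recognize that the two channels behave differently: $\mathcal A_m f+\mathcal B_m f$ deposits wavelet coefficients at its own scale $m$ and is therefore visible immediately, whereas each $\mathcal C_m f$, living in $V_m^{\mathfrak k}$, is invisible at every finer wavelet scale and instead accumulates through the scales, being carried upward precisely by the reconstruction recursion together with the running scaling part $\mathcal T_m f$. Keeping this bookkeeping straight — and matching the filter blocks $H^{(0)},H^{(1)},G^{(0)},G^{(1)}$ as they appear in the recursion to the inverse of the wavelet transform \eqref{decomposition_step}, since $U$ is unitary — is where the care lies; the remaining ingredients are only orthogonality of the multiresolution spaces and the telescoping of the nonstandard form.
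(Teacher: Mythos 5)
Your proof is correct and follows essentially the same route as the paper's: an induction over the scale with base case at scale zero, driven by the two-scale (filter) relations, where you simply make explicit the identifications of $\widetilde d^n$, $\widetilde s^n$, $\widehat s^n$ with the coefficients of $(\mathcal A_n+\mathcal B_n)f$, $\mathcal C_n f$, $\mathcal T_n f$ and the telescoping of the nonstandard form that the paper dismisses as "details omitted." The one point worth flagging is that the inverse of the unitary transform in \eqref{decomposition_step} involves the \emph{transposed} filter blocks $(H^{(0)})^T$, $(G^{(0)})^T$, etc., so the printed recursion matches the reconstruction only up to that transposition --- a defect of the paper's formula rather than of your argument, which correctly identifies the recursion as the inverse wavelet transform.
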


\begin{proof}
It is enough to prove \eqref{operator_application},
since the scale relation \eqref{decomposition_step} is already known.
Moreover,
the first equality in \eqref{operator_application} is obvious.
Now we can notice that the second equality in \eqref{operator_application}
clearly holds for the finest scale $\mathfrak n = 0$.
Thus the statement follows by induction
over $\mathfrak n$
from the equalities
\begin{equation*}
\begin{aligned}
    \int_0^1
    \phi_{pl}^n(x) \phi_{j, 2m}^{n + 1}(x)
    dx
    =
    h_{pj}^{(0)} \delta_{lm}
    , \qquad
    &
    \int_0^1
    \psi_{pl}^n(x) \phi_{j, 2m}^{n + 1}(x)
    dx
    =
    g_{pj}^{(0)} \delta_{lm}
    \\
    \int_0^1
    \phi_{pl}^n(x) \phi_{j, 2m + 1}^{n + 1}(x)
    dx
    =
    h_{pj}^{(1)} \delta_{lm}
    , \qquad
    &
    \int_0^1
    \psi_{pl}^n(x) \phi_{j, 2m + 1}^{n + 1}(x)
    dx
    =
    g_{pj}^{(1)} \delta_{lm}
\end{aligned}
\end{equation*}
that are straightforward to check.
Further details are omitted.
\end{proof}

\section{Review on time evolution operators in multiwavelet bases}
\label{Review_on_time_evolution_operators_in_multiwavelet_bases}
\setcounter{equation}{0}

There are two main examples of the operator $\mathcal T$ that we regard here, namely,
the heat
\(
    \exp \left( t \partial_x^2 \right)
\)
and Schrödinger
\(
    \exp \left( i t \partial_x^2 \right)
\)
evolution operators.
The former one is probably the most used and studied operator in relation
to application of multiwavelets.
In this section we review how the heat propagation operator
\(
    \exp \left( t \partial_x^2 \right)
\)
is discretised in a multiwavelet basis,
and which problems arise when the same techniques are directly adapted to the Schrödinger semigroup
\(
    \exp \left( i t \partial_x^2 \right)
    .
\)

\subsection{Finite interval representation with Dirichlet boundary conditions}
\label{Finite_interval_representation_with_Dirichlet_boundary_conditions}

The Laplacian $\partial_x^2$ together with the Direchlet boundary conditions,
or more precisely, defined on the domain
\[
    \Dom \left( \partial_x^2 \right)
    =
    \left\{
        u \in H_2^2(0, 1)
        \, \big| \,
        u(0) = u(1) = 0
    \right\}
    ,
\]
constitute an unbounded self-adjoint operator
in $L^2(0, 1)$.

The collection of functions
\(
    \left\{
        x \mapsto
        \sqrt 2 \sin( \pi \nu x )
    \right\}
    _{ \nu \in \mathbb N }
\)
forms an orthonormal basis in $L^2(0, 1)$.
Let $\mathbb F$ be the Fourier transform associated with this basis,
that is
\[
    \mathbb F u (\nu)
    =
    \int_0^1
    u(x)
    \sqrt 2 \sin( \pi \nu x )
    dx
    .
\]
The elements of this basis are eigenfunctions of the Laplacian $\partial_x^2$, with eigenvalues $- (\pi \nu)^2$, $\nu = 1, 2, \ldots$.
Therefore, according to the general spectral theory
for any function $\Psi$ we can define the following operator
\[
    \Psi \left( \partial_x^2 \right)
    u(x)
    =
    \sum_{\nu = 1}^{\infty}
    \Psi \left( - (\pi \nu)^2 \right)
    \int_0^1
    u(x')
    \sqrt 2 \sin( \pi \nu x' )
    dx'
    \sqrt 2 \sin( \pi \nu x )
\]
on functions $u \in L^2(0, 1)$.
Alternatively, this can be written as a Fourier transform, followed by multiplication by $\Psi \left( - (\pi \nu)^2 \right)$ and inverse Fourier transform:
\[
    \Psi \left( \partial_x^2 \right)
    =
    \mathbb F^{-1}
    \Psi \left( - (\pi \nu)^2 \right)
    \mathbb F
    .
\]
%

The Fourier transform $\mathbb F$ is an isomorphism and its inverse
is defined as
\[
    \mathbb F^{-1}
    :
    \ell^2( \mathbb N ) \ni \alpha
    \mapsto
    \sum_{\nu = 1}^{\infty}
    \alpha_{\nu} \sqrt 2 \sin( \pi \nu x )
    \in L^2(0, 1)
    .
\]
The main example we are interested in is
$\Psi(\lambda) = \exp(it\lambda)$.
The heat exponent with
$\Psi(\lambda) = \exp(t\lambda)$
has been previously considered in \cite{Alpert_Beylkin_Gines_Vozovoi2002}.
The propagator
\(
    \exp \left( i t \partial_x^2 \right)
\)
is defined through the symbol
\(
    \exp \left( -i t (\pi \nu)^2 \right)
\)
as described above. We would like
to represent this operator in a multiwavelet basis
in the same spirit as in \cite{Alpert_Beylkin_Gines_Vozovoi2002}.
Following the same logic we start with the projection of $u(x)$ in a \ac{MW} basis:
\[
    P^n u(x)
    =
    \sum_{l = 0}^{N - 1}
    \sum_{j = 0}^{\mathfrak k - 1}
    s_{jl}^n \phi_{jl}^n (x)
    , \quad
    N = 2^n
    ,
\]
and then consider its time evolution under the action of the operator 
\(
    \mathcal T = \Psi \left( \partial_x^2 \right)
    .
\)
The function
\(
    \mathcal T u
\)
is approximated by
\(
    \mathcal T_n u
\)
which
lies in the span of piecewise Legendre polynomials.
We have
\[
    \mathcal T_n u
    =
    \sum_{l = 0}^{N - 1}
    \sum_{j = 0}^{\mathfrak k - 1}
    \widetilde s_{jl}^n \phi_{jl}^n
    ,
\]
where the coefficients $\widetilde s_{jl}^n$ are obtained by applying $\Psi \left( \partial_x^2 \right)$ onto $P^n u(x)$, and then taking the inner product with the scaling function $\phi_{jl}^n (x)$,
\[
    \widetilde s_{jl}^n
    =
    \int_0^1
    \Psi \left( \partial_x^2 \right)
    P^n u(x)
    \phi_{jl}^n (x) dx
    =
    \int_0^1
    \mathbb F^{-1}
    \left(
        \Psi \left( - (\pi \nu)^2 \right)
        \mathbb F P^n u(\nu)
    \right)
    (x)
    \phi_{jl}^n (x) dx
    .
\]
The rightmost integral is the inner product in $L^2(0, 1)$.
It can be rewritten in terms of the inner product
in $\ell^2(\mathbb N)$ by noticing that
\(
    \mathbb F^* = \mathbb F^{-1}
\)
which follows immediately from the Parseval's identity.
Thus
\[
    \widetilde s_{j'l'}^n
    =
    \sum_{\nu = 1}^{\infty}
    \Psi \left( - (\pi \nu)^2 \right)
    \mathbb F P^n u(\nu)
    \mathbb F
    \phi_{j'l'}^n
    (\nu)
    =
    \sum_{l = 0}^{N - 1}
    \sum_{j = 0}^{\mathfrak k - 1}
    s_{jl}^n
    \sum_{\nu = 1}^{\infty}
    \Psi \left( - (\pi \nu)^2 \right)
    \mathbb F
    \phi_{jl}^n
    (\nu)
    \mathbb F
    \phi_{j'l'}^n
    (\nu)
    .
\]
Here we make a first remark on the convergence of this series.
Our main example
\(
    \Psi \left( - (\pi \nu)^2 \right)
    =
    \exp \left( -i t (\pi \nu)^2 \right)
\)
is a function bounded with respect to $\nu \in \mathbb N$.
Both
\(
    \mathbb F
    \phi_{jl}^n
\)
and
\(
    \mathbb F
    \phi_{j'l'}^n
\)
are $\ell^2$-sequences.
Therefore the series is absolutely convergent
by the Cauchy-Schwarz's inequality.
So far we do not have much information about the rate of convergence,
though one can anticipate a significant cancellation effect at high
frequencies $\nu$ due to oscillations caused by the exponent.

Let us simplify further the expression for
\(
    \mathbb F
    \phi_{jl}^n
    (\nu)
\)
in the following way
\begin{multline*}
    \mathbb F
    \phi_{jl}^n
    (\nu)
    =
    \int_0^1
    \phi_{jl}^n
    (x)
    \sqrt 2 \sin( \pi \nu x )
    dx
    =
    \frac 1{ \sqrt N }
    \int_0^1
    \phi_j(x)
    \sqrt 2 \sin \left( \pi \nu \frac{x + l}N \right)
    dx
    \\
    =
    - \frac i{ \sqrt {2N} }
    \int_0^1
    \phi_j(x)
    \left(
        \exp \left( i \pi \nu \frac{x + l}N \right)
        -
        \exp \left( - i \pi \nu \frac{x + l}N \right)
    \right)
    dx
    .
\end{multline*}
If we introduce the usual Fourier transform as
\begin{equation}
\label{Fourier_transform}
    \widehat f (\xi)
    =
    \mathcal F f (\xi)
    =
    \int_\R f(x) e^{-i \xi x} dx
\end{equation}
then after simple algebraic manipulations we get
\begin{multline*}
    \mathbb F
    \phi_{jl}^n
    (\nu)
    \mathbb F
    \phi_{j'l'}^n
    (\nu)
    =
    \frac 1N \re
    \left(
        \mathcal F \phi_j \left( \frac{ \pi \nu }N \right)
        \mathcal F \phi_{j'} \left( -\frac{ \pi \nu }N \right)
        \exp \frac{ i \pi \nu (l' - l) }N
        \right.
        \\
        \left.
        -
        \mathcal F \phi_j \left( -\frac{ \pi \nu }N \right)
        \mathcal F \phi_{j'} \left( -\frac{ \pi \nu }N \right)
        \exp \frac{ i \pi \nu (l' + l) }N
    \right)
    .
\end{multline*}
Summing up we obtain
\[
    \widetilde s_{j'l'}^n
    =
    \sum_{l = 0}^{N - 1}
    \sum_{j = 0}^{\mathfrak k - 1}
    s_{jl}^n
    \left(
        \left[ \sigma_{l' - l}^{1n} \right]_{j'j}
        -
        \left[ \sigma_{l' + l}^{2n} \right]_{j'j}
    \right)
    , \quad
    N = 2^n
    ,
\]
which shows that the matrix elements of $\mathcal T_n$ can be written as
\(
    \sigma_{l'l}^{n}
    =
    \sigma_{l' - l}^{1n}
    -
    \sigma_{l' + l}^{2n}
\)
in
\(
    \{ \phi_{jl}^n \}
    ,
\)
where we introduced the following notations
\[
    \left[ \sigma_l^{1n} \right]_{j'j}
    =
    \sum_{\nu = 1}^{\infty}
    \Psi \left( - (\pi \nu)^2 \right)
    \frac 1N \re
    \left(
    \mathcal F \phi_j \left( \frac{ \pi \nu }N \right)
    \overline{
        \mathcal F \phi_{j'} \left( \frac{ \pi \nu }N \right)
    }
    \exp \frac{ i \pi \nu l }N
    \right)
    ,
\]
\[
    \left[ \sigma_l^{2n} \right]_{j'j}
    =
    \sum_{\nu = 1}^{\infty}
    \Psi \left( - (\pi \nu)^2 \right)
    \frac 1N \re
    \left(
    \overline{
        \mathcal F \phi_j \left( \frac{ \pi \nu }N \right)
    }
    \overline{
        \mathcal F \phi_{j'} \left( \frac{ \pi \nu }N \right)
    }
    \exp \frac{ i \pi \nu l }N
    \right)
    .
\]

The advantage of the above equations is that the Fourier
transforms above can be computed exactly, because they involve the calculation of integrals of polynomials multiplied by exponentials.
Moreover, the recursion formula for the Legendre polynomials
implies a recursion for the corresponding Fourier transforms.
Indeed, integrating the following relation
\cite{Abramowitz}
with an exponential weight
\[
    (2j + 1) P_j(x)
    =
    \partial_x
    ( P_{j+1}(x) - P_{j-1}(x) )
\]
one obtains
\[
    \frac{2j + 1}{i\xi} \int_{-1}^1 P_j(x) e^{-i \xi x} dx
    =
    \int_{-1}^1 P_{j+1}(x) e^{-i \xi x} dx
    -
    \int_{-1}^1 P_{j-1}(x) e^{-i \xi x} dx
    .
\]
Hence, the Fourier-transformed Legendre scaling functions \eqref{Legendre_scaling_functions}
display the following recursion formulas
\begin{equation}
\label{Fourier_Legendre_recursion}
    \mathcal F \phi_{j+1}(\xi)
    =
    \sqrt{ \frac{ 2j + 3 }{ 2j - 1 } }
    \mathcal F \phi_{ j - 1 }(\xi)
    +
    \frac{ 2 \sqrt{ (2j + 1)(2j + 3) } }{ i\xi }
    \mathcal F \phi_j(\xi)
    ,
\end{equation}
whereas the first two Fourier transforms
\begin{equation}
\label{Fourier_Legendre_0}
    \mathcal F \phi_0(\xi)
    =
    \frac 1{ i\xi }
    \left(
        1 - e^{ -i\xi }
    \right)
    ,
\end{equation}
\begin{equation}
\label{Fourier_Legendre_1}
    \mathcal F \phi_1(\xi)
    =
    \frac { i \sqrt 3 }{ \xi }
    \left(
        1 + e^{ -i\xi }
    \right)
    +
    \frac { 2 \sqrt 3 }{ \xi^2 }
    \left(
        e^{ -i\xi } - 1
    \right)
\end{equation}
can be obtained by
direct calculations.

We remark again here that each Fourier transform
\(
    \mathcal F \phi_j \left( \frac{ \pi \nu }N \right)
    =
    \mathcal O
    \left(
        \frac N{\nu}
    \right)
    ,
\)
when $\nu \to \infty$,
as follows from the calculation of Fourier transforms.
This is prohibitively slow for our main example of bounded $\Psi$.
Keeping only the first $N^m$ terms of the sums we get
\[
    \left[ \sigma_l^{1n} \right]_{j'j}
    =
    \sum_{\nu = 1}^{N^m}
    \Psi \left( - (\pi \nu)^2 \right)
    \frac 1N \re
    \mathcal F \phi_j \left( \frac{ \pi \nu }N \right)
    \overline{
        \mathcal F \phi_{j'} \left( \frac{ \pi \nu }N \right)
    }
    \exp \frac{ i \pi \nu l }N
    +
    \mathcal O
    \left(
        \frac 1{N^{m-1}}
    \right)
    ,
\]
and similarly for
\(
    \left[ \sigma_l^{2n} \right]_{j'j}
    .
\)
Indeed, the reminder is bounded up to a constant by
\[
    \frac 1N
    \sum_{\nu = N^m + 1}^{\infty}
    \frac 1{( \nu / N )^2}
    \leqslant
    \int_{N^{m-1}}^{\infty}
    \frac{dx}{x^2}
    =
    \frac 1{N^{m-1}}
    .
\]
This is problematic since in practice an acceptable precision
demands setting $m = 3$ or larger,
which corresponds to the tolerance
\(
    \mathcal O
    \left(
        \frac 1{N^2}
    \right)
    =
    \mathcal O
    \left(
        2^{-2n}
    \right)
    .
\)
This requires at least $N^3$ multiplications.
On the other hand, the reminder can be estimated more precisely
by making use of the high oscillation of $\Psi$ mentioned above.
The advantage of the smoothing property of the exponential operator
will be exploited in Section~\ref{Contour_deformation_technique_section}.

For the heat equation the series
\(
    \left[ \sigma_l^{1n} \right]_{j'j}
\)
and
\(
    \left[ \sigma_l^{2n} \right]_{j'j}
\)
converge significantly faster thanks to
the decay
\(
    \Psi \left( - (\pi \nu)^2 \right)
    =
    \exp \left( - t (\pi \nu)^2 \right)
    .
\)
Moreover,
the corresponding non-standard form blocks
\(
    \alpha^n, \beta^n
\)
and $\gamma^n$
defined in \eqref{nonstandard_form_matrices}
turn out to be effectively sparse.
We remark here that the entries of the matrices
\(
    \left[ \sigma_l^{1n} \right]_{j'j}
    ,
    \left[ \sigma_l^{2n} \right]_{j'j}
\)
are calculated only once for a fixed and given time step $t$, or, in case of higher-order temporal integration schemes, for a sequence of time substeps.
Then each of these matrices is applied to a vector
\(
    \{ s_{jl}^n \}
\)
following a fast numerical procedure developed in
\cite{Beylkin_Coifman_Rokhlin}.
When the non-standard form of the operator is used, the sparse approximations
of
\(
    \alpha^n, \beta^n
\)
and $\gamma^n$
are applied to 
\(
    \{ s_{jl}^n \}
    ,
    \{ d_{jl}^n \}
    .
\)

\subsection{Convolution representation on the real line}
\label{Convolution_representation_on_the_real_line}

On the real line, the operator
\(
    \mathcal T = \Psi \left( \partial_x^2 \right)
\)
is a convolution of the form
\begin{equation}
\label{convolution}
    \mathcal T u (x)
    =
    \int_{\mathbb R}
    K(x - y) u(y)
    dy
\end{equation}
with the kernel
\(
    K
    =
    \mathcal F_{\xi}^{-1}
    \Psi
    \left(
        - \xi^2
    \right)
    .
\)
Its matrix elements \eqref{nonstandard_form_matrices} are the integrals
\[
    \left[ \sigma_{l'l}^n \right]_{j'j}
    =
    \int_0^1
    \int_0^1
    K(x - y)
    \phi_{j'l'}^n(x) \phi_{jl}^n(y)
    dx
    dy
\]
simplifying to
\[
    \left[ \sigma_{l'l}^n \right]_{j'j}
    =
    \left[ \sigma_{l' - l}^n \right]_{j'j}
    =
    \frac 1N
    \int_{-1}^1
    K
    \left(
        \frac{ z + l' - l }N
    \right)
    \Phi_{j'j}(z)
    dz
\]
according to \eqref{scaling_dilation_translation},
where $N = 2^n$ and the so called correlation functions
\[
    \Phi_{j'j}(z)
    =
    \int_0^1
    \phi_{j'}(x)
    \phi_j(x - z)
    dx
\]
have been introduced.
Each of them is continuous with the support in $[-1, 1]$.
Their restrictions to either $[-1, 0]$ or $[0, 1]$ are
polynomials of order at most $2 \mathfrak k - 1$.
Therefore,
one may expand the correlation functions in
\(
    L^2(-1, 1)
\)
as
\[
    \Phi_{j'j}(z)
    =
    \sum_{p = 0}^{2 \mathfrak k - 1}
    \left(
        c_{j'jp}^{(+)} \phi_p(z)
        +
        c_{j'jp}^{(-)} \phi_p(z + 1)
    \right)
    ,
\]
where the cross correlation coefficients
\[
\left \{ \,
\begin{aligned}
    &
    c_{j'jp}^{(+)}
    =
    \int_0^1
    \int_0^1
    \phi_{j'}(x)
    \phi_j(x - z)
    \phi_p(z)
    dx dz
    \\
    &
    c_{j'jp}^{(-)}
    =
    \int_0^1
    \int_0^1
    \phi_{j'}(x)
    \phi_j(x - z + 1)
    \phi_p(z)
    dx dz
\end{aligned}
\right.
    , \quad
    j', j = 0, \ldots, \mathfrak k - 1
    , \quad
    p = 0, \ldots, 2 \mathfrak k - 1
    ,
\]
are easily tabulated.
Thus the convolution operator $\mathcal T$
in the orthonormal collection
\(
    \{ \phi_{jl}^n \}
\)
has the matrix elements
\(
    \left[ \sigma_{l'l}^n \right]_{j'j}
    =
    \left[ \sigma_{l' - l}^n \right]_{j'j}
\)
depending only on the distance
\(
    l' - l
    =
    1 - N, \ldots, N - 1
\)
to the diagonal.
These elements may be evaluated via
\begin{equation}
\label{cross_correlation_convolution_operator_representation}
    \left[ \sigma_l^n \right]_{j'j}
    =
    \frac 1{\sqrt N }
    \sum_{p = 0}^{2 \mathfrak k - 1}
    \left(
        c_{j'jp}^{(+)} s_{p, l}^n(K)
        +
        c_{j'jp}^{(-)} s_{p, l - 1}^n(K)
    \right)
    ,
\end{equation}
where $N = 2^n$ and the scaling coefficients of the kernel $K$
are defined by \eqref{scaling_coefficients}.

The numerical evaluation of convolution operators
is reviewed in \cite{Fann_Beylkin_Harrison_Jordan2004},
for instance.
Equation \eqref{cross_correlation_convolution_operator_representation}
gives the pure scaling component of the operator, the first integral in \eqref{nonstandard_form_matrices}.
The other components can be calculated by the decomposition transformation
\eqref{nonstandard_form_decomposition_step} that
simplifies to
\begin{equation}
\label{nonstandard_form_convolution_decomposition_step}
    \begin{pmatrix}
        \sigma_l^n
        &
        \gamma_l^n
        \\
        \beta_l^n
        &
        \alpha_l^n
    \end{pmatrix}
    =
    U
    \begin{pmatrix}
        \sigma_{2l}^{n + 1}
        &
        \sigma_{2l-1}^{n + 1}
        \\
        \sigma_{2l+1}^{n + 1}
        &
        \sigma_{2l}^{n + 1}
    \end{pmatrix}
    U^T
\end{equation}
for the convolution operator $\mathcal T$
and $l = - N + 1, - N + 2, \ldots, N - 1$
standing for the distance to the diagonal.

We can now demonstrate how
the multiwavelet framework can be used
for the exponential heat operator
\(
    \exp \left( t \partial_x^2 \right)
    .
\)
It can be regarded as a convolution operator in $L^2(\mathbb R)$
of the form
\[
    \exp \left( t \partial_x^2 \right)
    u(x)
    =
    \frac 1{ \sqrt{4 \pi t} }
    \int_\R
    \exp
    \left(
        - \frac{ (x - y)^2 }{4t}
    \right)
    u(y) dy
    , \quad
    t > 0
    ,
\]
as shown in Ref~\cite{Evans}.
It is associated with the heat equation
\(
    \partial_t u = \partial_x^2 u
\)
on the real line $\mathbb R$.
The Green's function scaling coefficients
\(
    s_{p, l}^n(K)
\)
can be calculated easily with high precision.
Then from \eqref{cross_correlation_convolution_operator_representation},
\eqref{nonstandard_form_convolution_decomposition_step}
one obtains the corresponding \ac{NS}-form matrices
\(
    \alpha^n, \beta^n, \gamma^n
\)
with $n = 0, 1, \ldots$.
Gaussian kernels are separable and possess a narrow, diagonally banded structure
in their \ac{NS}-form, which has already been exploited for Chemistry applications in the past, in particular by approximating the Poisson and Helmholtz kernel as a sum of Gaussians~\cite{Harrison_Fann_Yanai_Gan_Beylkin2004,Fann_Beylkin_Harrison_Jordan2004,Frediani_Fossgaard_Fla_Ruud}, similar to the one appearing in this heat semigroup convolution.

An attempt to adapt the above approach
to the time-dependent Schrödinger equation
\eqref{general_schrodinger}
was made in
\cite{Vence_Harrison_Krstic2012}.
The exponential operator
\(
    \exp \left( i t \partial_x^2 \right)
\)
can be regarded as a convolution operator in $L^2(\mathbb R)$
of the form
\[
    \exp \left( i t \partial_x^2 \right)
    u(x)
    =
    \frac{ \exp( -i \pi / 4 ) }{ \sqrt{4 \pi t} }
    \int_\R
    \exp
    \left(
        \frac{ i(x - y)^2 }{4t}
    \right)
    u(y) dy
    , \quad
    t > 0
    ,
\]
see \cite{Evans}.
The oscillatory behaviour of the kernel makes it not feasible to 
calculate the convolution with a given high precision.
Therefore, in
\cite{Vence_Harrison_Krstic2012}
this difficulty has been overcome by damping high frequencies:
the kernel was approximated by
\[
    K(x)
    \approx
    \mathcal F_{\xi}^{-1}
    \left(
        e^{-it \xi^2} \chi(\xi)
    \right)
    (x)
    ,
\]
where $\chi(\xi)$ is a smooth cut off function.
The problem with such an approach is that
it is difficult to control the error and
the cut off $\chi(\xi)$ should be tuned for each particular problem.
Moreover, as in
Subsection~\ref{Finite_interval_representation_with_Dirichlet_boundary_conditions} this approach ignores
the smoothing property of the exponential operator.

\section{Contour deformation technique}
\label{Contour_deformation_technique_section}
\setcounter{equation}{0}

Kaye~\etal{} \cite{Kaye_Barnett_Greengard2022} have exploited the smoothing property of
the exponential operator
\(
    \mathcal T
    =
    \exp \left( i t \partial_x^2 \right).
\)
They regarded Equation \eqref{general_schrodinger}
in the frequency domain.
Returning back to the physical domain,
where the potential $V(x, t)$ is applied,
they calculated the inverse Fourier transform $\mathcal F^{-1}$
over a specifically deformed contour $\Gamma$ instead of $\mathbb R$,
which allowed them to increase the precision significantly,
while keeping the advantage of using \ac{FFT} based schemes.
In this section we adopt their approach
in order to get a precise
multiresolution representation of $\mathcal T$.
As in Section
\ref{Finite_interval_representation_with_Dirichlet_boundary_conditions},
we can write
\[
    \mathcal T_n u
    =
    P^n
    \exp \left( i t \partial_x^2 \right)
    P^n
    u
    =
    P^n
    \mathcal F^{-1}
    e^{-it \xi^2}
    \mathcal F
    P^n
    u
    =
    \sum_{l = 0}^{N - 1}
    \sum_{j = 0}^{\mathfrak k - 1}
    \widetilde s_{jl}^n \phi_{jl}^n
    ,
\]
where now the operator $\mathcal T$ is unitary in $L^2(\mathbb R)$.
Similarly, we obtain
\[
    \widetilde s_{j'l'}^n
    =
    \sum_{l = 0}^{N - 1}
    \sum_{j = 0}^{\mathfrak k - 1}
    s_{jl}^n
    \left[ \sigma_{l' - l}^{n} \right]_{j'j}
    , \quad
    N = 2^n
    ,
\]
where the matrix
\(
    \left[ \sigma_{l' l}^{n} \right]_{j'j}
\)
defined in
\eqref{nonstandard_form_matrices}
depends only on the distance $l' - l$ to the diagonal,
as explained in Section
\ref{Convolution_representation_on_the_real_line}.
In terms of the introduced notations we have the following expression
\begin{multline*}
    \left[ \sigma_l^{n} \right]_{j'j}
    =
    \frac 1{2 \pi N}
    \int_{\R}
    \exp
    \left(
        \frac{ i \xi l }N - it \xi^2
    \right)
    \mathcal F \phi_j \left( \frac{ \xi }N \right)
    \overline{
        \mathcal F \phi_{j'} \left( \frac{ \xi }N \right)
    }
    d\xi
    \\
    =
    \frac 1{2 \pi}
    \int_{\R}
    \exp
    \left(
        i \xi l - it N^2 \xi^2
    \right)
    \mathcal F \phi_j (\xi)
    \mathcal F \phi_{j'} (-\xi)
    d\xi
    .
\end{multline*}
Note that
\(
    \left[ \sigma_l^{n} \right]_{j'j}
    =
    \left[ \sigma_{-l}^{n} \right]_{jj'}
    .
\)
The Fourier transforms of the scaling functions are given
by \eqref{Fourier_Legendre_recursion},
\eqref{Fourier_Legendre_0}, \eqref{Fourier_Legendre_1}.
Clearly,
the integrand can be extended to an entire function
of complex variable $\zeta \in \mathbb C$.
In two quadrants of $\mathbb C$-plane we have
$\re \left( - it N^2 \zeta^2 \right) < 0$,
which suggests that deforming the integration contour into them, would lead to a more effective calculation of the integral with respect to $\zeta$.
We could in principle deform the integration contour
exactly as was done in
\cite{Kaye_Barnett_Greengard2022},
so that
the new contour $\Gamma$ would be chosen depending on the sign of $t$
and an accuracy parameter $\varepsilon$.
Without loss of generality one can assume $t > 0$ from now on.
Their contour $\Gamma_H$ is kept
in an $H$-neighbourhood of the real axis $\mathbb R$,
i.e. in the band $| \im \zeta | \leqslant H$,
where the bound $H > 0$ is introduced
in order to avoid
multiplication of big and small numbers
while calculating the integral.
A direct repetition of their argument leads to the optimal bound
\begin{equation}
\label{H_bound}
    H
    =
    \frac 1N \log \frac{ \pi \varepsilon }{ 2 \varepsilon_{\text{mach}} }
    ,
\end{equation}
where $\varepsilon$ is the desired precision
and $\varepsilon_{\text{mach}}$ is the machine epsilon
(the precision limit of floating-point arithmetic on a computer,
approximately $2^{-52}$ for double type),
see the details in Subsection~\ref{Haar_multiresolution_analysis_subsection}.
However,
in our case we can use as an advantage the fact
that we are able to calculate
Fourier transforms of piecewise polynomials $\phi_j$
exactly.
Indeed, all the integrands are sums of exponents
(up to powers of $1 / \zeta$),
therefore, 
combining them together one can try to avoid inaccurate multiplications.
Without this restriction, we are able to take a contour
that will allow us to exploit the smoothing semigroup property
at its best.
Namely,
we choose a contour to be the line $\Gamma = (1 - i) \mathbb R$
oriented with angle $- \pi/ 4$ to the real line.
On such contour $\Gamma$ the main exponent part
\(
    - it N^2 \zeta^2
    =
    - t N^2 |\zeta|^2
    ,
\)
which guarantees fast integral convergence.

The nonstandard form matrices \eqref{nonstandard_form_matrices}
take their final form
\begin{equation}
\label{sigma_contour_matrix}
    \left[ \sigma_l^{n} \right]_{pj}
    =
    \frac 1{2 \pi}
    \int_{\Gamma}
    \exp
    \left(
        i \zeta l - it N^2 \zeta^2
    \right)
    \mathcal F \phi_j (\zeta)
    \mathcal F \phi_{p} (-\zeta)
    d\zeta
    ,
\end{equation}
\begin{equation}
\label{alpha_contour_matrix}
    \left[ \alpha_l^{n} \right]_{pj}
    =
    \frac 1{2 \pi}
    \int_{\Gamma}
    \exp
    \left(
        i \zeta l - it N^2 \zeta^2
    \right)
    \mathcal F \psi_j (\zeta)
    \mathcal F \psi_{p} (-\zeta)
    d\zeta
    ,
\end{equation}
\begin{equation}
\label{beta_contour_matrix}
    \left[ \beta_l^{n} \right]_{pj}
    =
    \frac 1{2 \pi}
    \int_{\Gamma}
    \exp
    \left(
        i \zeta l - it N^2 \zeta^2
    \right)
    \mathcal F \phi_j (\zeta)
    \mathcal F \psi_{p} (-\zeta)
    d\zeta
    ,
\end{equation}
\begin{equation}
\label{gamma_contour_matrix}
    \left[ \gamma_l^{n} \right]_{pj}
    =
    \frac 1{2 \pi}
    \int_{\Gamma}
    \exp
    \left(
        i \zeta l - it N^2 \zeta^2
    \right)
    \mathcal F \psi_j (\zeta)
    \mathcal F \phi_{p} (-\zeta)
    d\zeta
    .
\end{equation}
As we shall see below, the matrices are effectively sparse,
due to
\(
    \mathcal F \psi_j (0) = 0
\)
of at least order $\mathfrak k$.
We point out here that Fourier transform
\(
    \mathcal F \psi_j (\xi)
\)
can be easily found analytically and extended
to the complex plane $\zeta \in \mathbb C$ as
\[
    \widehat \psi_m (\zeta)
    =
    \frac 1{\sqrt 2}
    \sum_{j = 0}^{\mathfrak k - 1}
    \widehat \phi_j \left( \frac{\zeta}2 \right)
    \left[
        g_{ij}^{(0)}
        +
        g_{ij}^{(1)} \exp \left( - \frac {i\zeta}2 \right)
    \right]
    ,
\]
where $g_{ij}^{(0)}$ and $g_{ij}^{(1)}$ are elements
of the filter matrices $G^{(0)}$ and $G^{(1)}$, respectively.
The justification of the contour deformation is straightforward,
so we omit the proof.

Before we continue with the evaluation of these integrals, it is worth
to make a couple of remarks on the behaviour of the integrands in
\eqref{sigma_contour_matrix}-\eqref{gamma_contour_matrix}.
As can be seen from Equations
\eqref{Fourier_Legendre_recursion}-\eqref{Fourier_Legendre_1},
the Legendre Fourier transform extensions
$\widehat \phi_j(\zeta)$
have a removable singularity at zero $\zeta = 0$.
The same is true for
the interpolating Fourier transform extensions
$\widehat \varphi_j(\zeta)$
and the wavelet Fourier transform extensions
$\widehat \psi_j(\zeta)$.
This suggests that they can be approximated by power series
around zero.
Moreover, taking into account the Gaussian factor
\(
    \exp
    \left(
        - t N^2 |\zeta|^2
    \right)
    ,
\)
one may expect that these power series will provide good numerical values also away from the origin $\zeta = 0$.
Below, we will mostly focus on developing this idea further.
The last remark concerns the price to pay for the contour deformation.
As it will be obvious below, apart from the Gaussian fast decreasing factor we also get the increasing factor
\(
    \exp
    \left(
        (|l| + 1) |\zeta| / \sqrt 2
    \right)
    ,
\)
that turns out to be crucial, due to round up errors,
when the time step $t$ is small and the scale $n$ is coarse (small $n$).
This problem is in practice overcome by the algorithm used to construct the \ac{NS} form of the operator: the first integral \eqref{sigma_contour_matrix}
is computed with high precision at the finest scale $n$ required,
depending on $t$. The operator at coarser scales are obtained by virtue of the \ac{MW} transform \eqref{nonstandard_form_convolution_decomposition_step}.

\subsection{Legendre scaling functions}
\label{Legendre_scaling_functions_subsection}

Our goal is to calculate the integrals
\begin{equation}
\label{sigma_contour_matrix_a}
    \left[ \sigma_l^n \right]_{pj}
    (a)
    =
    \frac 1{2 \pi}
    \int_{\Gamma}
    \exp
    \left(
        i \zeta l - ia \zeta^2
    \right)
    \widehat \phi_j (\zeta)
    \widehat \phi_p (-\zeta)
    d\zeta
    , \quad
    0 \leqslant p, j < \mathfrak k
    ,
\end{equation}
where $a = t N^2 = t 4^n$.
We can represent the multiplication of
Fourier transforms as a $\zeta$-power series.
This will transform this integral into a series of
integrals $J_k(a, l)$, that can be evaluated exactly,
with some coefficients
depending solely on $p, j, k$.
These coefficients need to be computed only once and tabulated,
as they are problem-independent.
We will call them correlation coefficients.
We make an extensive use of \eqref{Fourier_Legendre_recursion}
which is valid for integers $j \geqslant 1$.

Firstly, we notice that $\widehat \phi_j(\zeta)$
has a root at zero of order $j$.
Indeed,
\[
    \partial_{\zeta}^m \widehat \phi_j(0)
    =
    (-i)^m \int_{\mathbb R} \phi_j(x) x^m dx
\]
that equals zero for any $0 \leqslant m < j$
and a non-zero provided $m = j$.
In particular,
\(
    \widehat \phi_0(0) = 1
\)
and
\(
    \widehat \phi_1(0) = 0
    .
\)
In other words, the Taylor series of the entire function
\(
    \widehat \phi_j(\zeta)
\)
starts with the power $j$.

Secondly, we can see from Equations
\eqref{Fourier_Legendre_recursion}-\eqref{Fourier_Legendre_1}
that each
\(
    \widehat \phi_j(\zeta)
\)
is a combination
of powers of $1 / \zeta$
and of exponentials
\(
    e^{\pm i \zeta}
    .
\)
More precisely,
\(
    \widehat \phi_j(\zeta)
    =
    \Phi_j(-i\zeta)
    ,
\)
where
\begin{equation*}
    \Phi_j(x)
    =
    A_0^j \frac 1x + \ldots + A_j^j \frac 1{x^{j+1}}
    +
    B_0^j \frac {e^x}x + \ldots + B_j^j \frac {e^x}{x^{j+1}}
    .
\end{equation*}

These coefficients are real and can be found exactly.
Indeed, from \eqref{Fourier_Legendre_0} we deduce
\begin{equation}
\label{AB_coefficients_for_j_is_0}
    A_0^0 = -1
    , \quad
    B_0^0 = 1
    ,
\end{equation}
and from \eqref{Fourier_Legendre_1} we deduce
\begin{equation}
\label{AB_coefficients_for_j_is_1}
    A_0^1 = \sqrt{3}
    , \quad
    A_1^1 = 2 \sqrt{3}
    , \quad
    B_0^1 = \sqrt{3}
    , \quad
    B_1^1 = -2 \sqrt{3}
    .
\end{equation}
For $j \geqslant 1$
from \eqref{Fourier_Legendre_recursion}
we deduce the following relation
\begin{equation}
\label{A_coefficients_recurrence}
\begin{aligned}
    A_0^{j+1}
    &=
    \sqrt{ \frac{ 2j + 3 }{ 2j - 1 } }
    A_0^{j-1}
    \\
    A_1^{j+1}
    &=
    \sqrt{ \frac{ 2j + 3 }{ 2j - 1 } }
    A_1^{j-1}
    -
    2 \sqrt{ (2j + 1)(2j + 3) }
    A_0^j
    \\
    &\ldots \qquad \ldots \qquad \ldots \qquad \ldots \qquad \ldots
    \\
    A_{j-1}^{j+1}
    &=
    \sqrt{ \frac{ 2j + 3 }{ 2j - 1 } }
    A_{j-1}^{j-1}
    -
    2 \sqrt{ (2j + 1)(2j + 3) }
    A_{j-2}^j
    \\
    A_j^{j+1}
    &=
    -
    2 \sqrt{ (2j + 1)(2j + 3) }
    A_{j-1}^j
    \\
    A_{j+1}^{j+1}
    &=
    -
    2 \sqrt{ (2j + 1)(2j + 3) }
    A_j^j
\end{aligned}
\end{equation}
which also holds for $B_m^j$ with $j \geqslant 1$.

It is now possible to write the Taylor series
about zero $\zeta = 0$ for the entire function
\(
    \widehat \phi_j (\zeta)
    \widehat \phi_p (-\zeta)
    =
    \Phi_j (-i\zeta)
    \Phi_p (i\zeta)
    .
\)
For $j, p \geqslant 0$ we have
\begin{equation*}
    \Phi_j (-x)
    \Phi_p (x)
    =
    \sum_{m=0}^j
    \sum_{q=0}^p
    \frac{ (-1)^{m+1} }{ x^{m+q+2} }
    \left(
        A_m^j A_q^p
        +
        B_m^j B_q^p
        +
        A_m^j B_q^p e^x
        +
        B_m^j A_q^p e^{-x}
    \right)
    .
\end{equation*}
Expanding the exponentials $e^{\pm x}$ in powers of $x$ we obtain
\begin{equation*}
    \Phi_j (-x)
    \Phi_p (x)
    =
    \sum_{m=0}^j
    \sum_{q=0}^p
    \sum_{k = j + p + 2 + m + q}^{\infty}
    \frac{ (-1)^{m+1} x^{k - m - q - 2} }{ k! }
    \left(
        A_m^j B_q^p
        +
        (-1)^k
        B_m^j A_q^p
    \right)
    ,
\end{equation*}
where we have discarded small powers,
since the Taylor series starts from the power $j + p$
in $x$-variable.
We can change the summation variable in the third sum as
\begin{equation*}
    \Phi_j (-x)
    \Phi_p (x)
    =
    \sum_{m=0}^j
    \sum_{q=0}^p
    \sum_{k = 0}^{\infty}
    \frac{ (-1)^{m+1} x^{k + j + p} }{ (k + 2 + j + p + m + q)! }
    \left(
        A_m^j B_q^p
        +
        (-1)^{k + j + p + m + q}
        B_m^j A_q^p
    \right)
    .
\end{equation*}
This series can be written down as
\begin{equation*}
    \Phi_j (-x)
    \Phi_p (x)
    =
    \sum_{k = 0}^{\infty}
    \frac{ C_{jp}^k x^{k + j + p} }{ (k + 2 + j + p)! }
    ,
\end{equation*}
where
\begin{equation*}
    C_{jp}^k
    =
    \sum_{m=0}^j
    \sum_{q=0}^p
    \frac{ (-1)^{m+1} (k + 2 + j + p)! }{ (k + 2 + j + p + m + q)! }
    \left(
        A_m^j B_q^p
        +
        (-1)^{k + j + p + m + q}
        B_m^j A_q^p
    \right)
    .
\end{equation*}
In particular,
\(
    C_{00}^k = 1 + (-1)^k
    ,
\)
that constitute all cross correlation coefficients required
for the Haar multiresolution.

There is an obvious problem here, namely,
\(
    \left| A_m^j \right|
    \sim
    (4j)^m
\)
which is numerically problematic, and the same is true for $B_m^j$.
So one may try to balance multiplication
by factorising it in the following way
\begin{equation*}
    C_{jp}^k
    =
    \sum_{m=0}^j
    \sum_{q=0}^p
    \frac{ (-1)^{m+1} (k + 2 + j + p)! (4j)^m (4p)^q }{ (k + 2 + j + p + m + q)! }
    \left(
        \widetilde A_m^j \widetilde B_q^p
        +
        (-1)^{k + j + p + m + q}
        \widetilde B_m^j \widetilde A_q^p
    \right)
    ,
\end{equation*}
with
\begin{equation*}
    \widetilde A_m^j
    =
    \frac
    {A_m^j}
    {(4j)^m}
    , \quad
    \widetilde B_q^p
    =
    \frac
    {B_q^p}
    {(4p)^q}
    , \quad
    \text{and so on.}
\end{equation*}
These new coefficients satisfy
the following relation
\begin{equation}
\label{tilde_B_coefficients_recurrence}
\begin{aligned}
    \widetilde B_0^{j+1}
    &=
    \sqrt{ \frac{ 2j + 3 }{ 2j - 1 } }
    \widetilde B_0^{j-1}
    \\
    \widetilde B_1^{j+1}
    &=
    \sqrt{ \frac{ 2j + 3 }{ 2j - 1 } }
    \frac{ j - 1 }{ j + 1 }
    \widetilde B_1^{j-1}
    -
    \sqrt{ \frac {(2j + 1)(2j + 3)}{(2j + 2)(2j + 2)} }
    \widetilde B_0^j
    \\
    \widetilde B_2^{j+1}
    &=
    \sqrt{ \frac{ 2j + 3 }{ 2j - 1 } }
    \left( \frac{ j - 1 }{ j + 1 } \right)^2
    \widetilde B_2^{j-1}
    -
    \sqrt{ \frac {(2j + 1)(2j + 3)}{(2j + 2)(2j + 2)} }
    \frac j{ j + 1 }
    \widetilde B_1^j
    \\
    &\ldots \qquad \ldots \qquad \ldots \qquad \ldots \qquad \ldots
    \\
    \widetilde B_{j-1}^{j+1}
    &=
    \sqrt{ \frac{ 2j + 3 }{ 2j - 1 } }
    \left( \frac{ j - 1 }{ j + 1 } \right)^{j-1}
    \widetilde B_{j-1}^{j-1}
    -
    \sqrt{ \frac {(2j + 1)(2j + 3)}{(2j + 2)(2j + 2)} }
    \left( \frac j{ j + 1 } \right)^{j-2}
    \widetilde B_{j-2}^j
    \\
    \widetilde B_j^{j+1}
    &=
    -
    \sqrt{ \frac {(2j + 1)(2j + 3)}{(2j + 2)(2j + 2)} }
    \left( \frac j{ j + 1 } \right)^{j-1}
    \widetilde B_{j-1}^j
    \\
    \widetilde B_{j+1}^{j+1}
    &=
    -
    \sqrt{ \frac {(2j + 1)(2j + 3)}{(2j + 2)(2j + 2)} }
    \left( \frac j{ j + 1 } \right)^j
    \widetilde B_j^j
\end{aligned}
\end{equation}
which also holds for $\widetilde A_m^j$ with $j \geqslant 1$, whereas
the initial coefficients are
\begin{equation*}
    \widetilde A_0^0 = -1
    , \quad
    \widetilde A_0^1 = \sqrt{3}
    , \quad
    \widetilde A_1^1 = \frac{\sqrt{3}}2
    ,
\end{equation*}
\begin{equation}
\label{tilde_first_B_coefficients}
    \widetilde B_0^0 = 1
    , \quad
    \widetilde B_0^1 = \sqrt{3}
    , \quad
    \widetilde B_1^1 = - \frac{\sqrt{3}}2
    .
\end{equation}

These formulas are well balanced and can be easily implemented. 
It turns out that due to multiple self-cancellations,
the correlation coefficients stay bounded.
For all the ranges of $k, j, p$ of interest
our simulations give $|C_{jp}^k| < 10$.
Moreover, we can further simplify the expression for
\(
    C_{jp}^k
    ,
\)
by exploiting some symmetries of $A$- and $B$-coefficients.
For example, we can guarantee that $C_{jp}^{2k + 1} = 0$
for all $k = 0, 1, \ldots$.
\begin{lemma}
    For any $j = 0, 1, \ldots$ and $m = 0, 1, \ldots, j$ it holds true
    that
    \(
        A_m^j = ( -1 )^{1 + j - m} B_m^j
        .
    \)
\end{lemma}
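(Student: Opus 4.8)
The plan is to establish the identity by a two-step induction on $j$, using the decisive structural fact (stated just after \eqref{A_coefficients_recurrence}) that the coefficients $A_m^j$ and $B_m^j$ satisfy \emph{the same} recurrence. First I would rewrite \eqref{A_coefficients_recurrence} in the single unified form, valid for $j \geqslant 1$ and all $0 \leqslant m \leqslant j + 1$,
\[
    A_m^{j+1}
    =
    \sqrt{ \frac{ 2j + 3 }{ 2j - 1 } }
    A_m^{j-1}
    -
    2 \sqrt{ (2j + 1)(2j + 3) }
    A_{m-1}^j
    ,
\]
under the conventions $A_{-1}^j = 0$ and $A_m^{j-1} = 0$ for $m > j - 1$, with the identical relation holding for $B_m^j$. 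These conventions reproduce each individual line of \eqref{A_coefficients_recurrence}, including the two endpoint cases $m = j$ and $m = j + 1$.

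For the base I would check the claim directly. At $j = 0$ the only case is $A_0^0 = -1 = -B_0^0$, matching the sign $(-1)^{1+0-0}$, from \eqref{AB_coefficients_for_j_is_0}. At $j = 1$, \eqref{AB_coefficients_for_j_is_1} gives $A_0^1 = \sqrt 3 = B_0^1$ (sign $(-1)^{2}$) and $A_1^1 = 2\sqrt 3 = -B_1^1$ (sign $(-1)^{1}$), so the identity holds for $j = 0, 1$.

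For the inductive step I would assume the identity at levels $j - 1$ and $j$ and derive it at level $j + 1$. The hypotheses read $A_m^{j-1} = (-1)^{j-m} B_m^{j-1}$ and $A_{m-1}^j = (-1)^{j-m} B_{m-1}^j$, the two sign exponents $1 + (j-1) - m$ and $1 + j - (m-1)$ both reducing to $j - m$ modulo $2$. Substituting into the unified recurrence, the common factor $(-1)^{j-m}$ pulls out of the entire right-hand side, which then coincides with the $B$-recurrence; hence $A_m^{j+1} = (-1)^{j-m} B_m^{j+1} = (-1)^{1 + (j+1) - m} B_m^{j+1}$, as required. The only point demanding care — and the nearest thing to an obstacle — is the sign bookkeeping together with the boundary conventions at $m = 0$, $m = j$, and $m = j + 1$, where one must confirm that the vanishing terms $A_{-1}^j = B_{-1}^j = 0$ and $A_{j+1}^{j-1} = B_{j+1}^{j-1} = 0$ keep both sides consistent.
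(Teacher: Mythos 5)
Your proof is correct, and it is organized differently from the paper's. Both arguments ultimately rest on the same two facts --- that $A_m^j$ and $B_m^j$ satisfy the identical recurrence \eqref{A_coefficients_recurrence} and that the sign $(-1)^{1+j-m}$ is compatible with the index shifts $(m,j)\mapsto(m,j-2)$ and $(m,j)\mapsto(m-1,j-1)$ appearing in it --- but the induction schemes differ. The paper first isolates the boundary cases $m=0$, $m=j-1$, $m=j$ by separate one-variable inductions, then rewrites the claim as $A_z^{z+s}=(-1)^{1+s}B_z^{z+s}$ and runs a two-variable induction over $(z,s)$, which it linearizes through an explicit bijection $\mathbb N \to \mathbb N_0^2$ with the increments \eqref{bijection_increment_1}--\eqref{bijection_increment_2}. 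You instead induct on $j$ alone (using the two preceding levels $j-1$ and $j$ as hypothesis, hence the base cases $j=0,1$), treating all $m$ simultaneously by packaging the six lines of \eqref{A_coefficients_recurrence} into one unified relation with the zero-conventions $A_{-1}^j=0$ and $A_m^{j-1}=0$ for $m>j-1$. This absorbs the paper's separately-handled boundary cases into the conventions (the vanishing terms are trivially consistent with any sign, as you note), eliminates the bijection machinery entirely, and makes the sign bookkeeping a one-line congruence check $1+(j-1)-m \equiv 1+j-(m-1) \equiv 1+(j+1)-m \pmod 2$. The paper's route makes the diagonal structure $A_z^{z+s}$ explicit, which is perhaps more suggestive of where the sign $(-1)^{1+s}$ comes from, but your version is shorter and leaves fewer cases to verify.
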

\begin{proof}
The proof is split in two steps.
Firstly, we claim the statement for $m = 0$, namely,
\begin{equation}
\label{coefficient_lemma_proof_step_1}
    A_0^j = ( -1 )^{1 + j} B_0^j
    , \quad
    j \in \mathbb N_0
    .
\end{equation}
Indeed,
for $j = 0$ it follows from \eqref{AB_coefficients_for_j_is_0}
and 
for $j = 1$ it follows from \eqref{AB_coefficients_for_j_is_1}.
For the other integers it follows by the induction
\[
    A_0^{j + 1}
    =
    c_1(0, j + 1)
    A_0^{j - 1}
    =
    c_1(0, j + 1)
    (-1)^j
    B_0^{j - 1}
    =
    (-1)^j
    B_0^{j + 1}
\]
due to the relation \eqref{A_coefficients_recurrence}
which both $A$- and $B$-coefficients satisfy to.
Similarly, we obtain
\begin{equation}
\label{coefficient_lemma_proof_step_1_m_not_zero}
    A_{j - 1}^j = B_{j - 1}^j
    , \quad
    A_j^j = - B_j^j
    , \quad
    j \in \mathbb N
    .
\end{equation}

The second step is to prove that
\begin{equation}
\label{coefficient_lemma_proof_step_2}
    A_z^{z + s} = ( -1 )^{1 + s} B_z^{z + s}
\end{equation}
for all
\(
    z, s \in \mathbb N_0
    .
\)
One could proceed by using induction over two variables,
but we can easily reduce the problem to the standard one variable induction.
Indeed, let us consider a standard bijection between
\(
    n \in \mathbb N
\)
and
\(
    (z, s) \in \mathbb N_0^2
    ,
\)
mapping the increment $n \mapsto n + 1$
either as
\begin{equation}
\label{bijection_increment_1}
    (z, s)
    \mapsto
    (z + 1, s - 1)
\end{equation}
or as
\begin{equation}
\label{bijection_increment_2}
    z \mapsto 0
    \mbox{ and }
    z + s
    \mapsto
    z + s + 1
    .
\end{equation}
Let $\mathcal P(n)$
stand for the statement \eqref{coefficient_lemma_proof_step_2}
with $z(n), s(n)$.
The induction base follows from the previous step
\eqref{coefficient_lemma_proof_step_1}.
Let $\mathcal P(1), \ldots, \mathcal P(n)$ hold true.
We need to check the validity of $\mathcal P(n + 1)$.
If the increment $n \mapsto n + 1$ corresponds to \eqref{bijection_increment_2},
then $s(n) = 0$, $s(n+1) = z(n) + 1$
and
$\mathcal P(n + 1)$
stands for the statement
\begin{equation*}
    A_0^{z(n) + s(n) + 1} = ( -1 )^{2 + z(n)} B_0^{z(n) + s(n) + 1}
\end{equation*}
holding true by
\eqref{coefficient_lemma_proof_step_1}.
It is left to check
$\mathcal P(n + 1)$
for the case of the correspondence $n \mapsto n + 1$
to \eqref{bijection_increment_1}, namely,
we need to prove
\begin{equation*}
    A_{z(n) + 1}^{z(n) + s(n)} = ( -1 )^{s(n)} B_{z(n) + 1}^{z(n) + s(n)}
    .
\end{equation*}
This is obviously true when $s(n) = 1$ or $s(n) = 2$
due to \eqref{coefficient_lemma_proof_step_1_m_not_zero}.
For other possible $z = z(n)$ and $s = s(n)$,
i.e. $s \geqslant 3$,
it follows from the recurrence relation \eqref{A_coefficients_recurrence}
as
\begin{multline*}
    A_{z + 1}^{z + s}
    =
    c_1( z + 1, z + s )
    A_{z + 1}^{z + s - 2}
    -
    c_2( z + 1, z + s )
    A_{z}^{z + s - 1}
    \\
    =
    c_1( z + 1, z + s )
    (-1)^{s - 2}
    B_{z + 1}^{z + s - 2}
    -
    c_2( z + 1, z + s )
    (-1)^s
    B_{z}^{z + s - 1}
    =
    ( -1 )^s B_{z + 1}^{z + s}
\end{multline*}
and the induction assumption on the validity of the first $n$ statements.

\end{proof}

From this lemma one can easily derive the final expression
\begin{equation}
\label{cross_correlation_coefficients}
    C_{jp}^k
    =
    (-1)^j \left( 1 + (-1)^k \right)
    \sum_{m=0}^j
    \sum_{q=0}^p
    \frac{ (k + 2 + j + p)! (4j)^m (4p)^q }{ (k + 2 + j + p + m + q)! }
    \widetilde B_m^j \widetilde B_q^p
    .
\end{equation}
In particular, $C_{jp}^k = 0$ for odd indices $k$.
Finally, we show that these coefficients are uniformly bounded
with respect to $k$.

\begin{lemma}
\label{cross_correlation_coefficients_bound_lema}
    For any non-negative integers $k, j, p$ the following bound holds true
    \begin{equation*}
        \left| C_{jp}^k \right|
        \leqslant
        2^{j + p + 1} \sqrt{(2j + 1)(2p + 1)}
        \left \{
        \begin{aligned}
            (j + 1)
            \frac{ (p/j)^{p + 1} - 1 }{ p/j - 1}
            , \quad
            \mbox{ provided }
            p = 3(j - 1)
            ,
            \\
            (p + 1)
            \frac{ (j/p)^{j + 1} - 1 }{ j/p - 1}
            , \quad
            \mbox{ provided }
            j = 3(p - 1)
            ,
            \\
            \frac
            {
                \left( (4j)^{j + 1} - (3 + j + p)^{j + 1} \right)
                \left( (4p)^{p + 1} - (3 + j + p)^{p + 1} \right)
            }
            {
                ( 3(j - 1) - p )
                ( 3(p - 1) - j )
                (3 + j + p)^{j + p}
            }
            \text{, other.}
        \end{aligned}
        \right.
    \end{equation*}
\end{lemma}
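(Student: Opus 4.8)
The plan is to strip everything down to the bare coefficients $B_m^j$ and then estimate the resulting double sum termwise. First I would absorb the balancing factors: since $(4j)^m\widetilde B_m^j = B_m^j$ and $(4p)^q\widetilde B_q^p = B_q^p$, the closed expression \eqref{cross_correlation_coefficients} collapses to
\[
    C_{jp}^k
    =
    (-1)^j\left(1+(-1)^k\right)
    \sum_{m=0}^j\sum_{q=0}^p
    \frac{(k+2+j+p)!}{(k+2+j+p+m+q)!}\,B_m^j B_q^p .
\]
Using $\abs{1+(-1)^k}\leqslant 2$ together with the triangle inequality, the bound splits into two independent estimates, one in the $(j,m)$ variables and one in the $(p,q)$ variables:
\[
    \abs{C_{jp}^k}
    \leqslant
    2
    \sum_{m=0}^j\sum_{q=0}^p
    \frac{(k+2+j+p)!}{(k+2+j+p+m+q)!}\,\abs{B_m^j}\,\abs{B_q^p} .
\]

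Second, I would pin down $\abs{B_m^j}$ in closed form. Repeated integration by parts in $\widehat\phi_j(\zeta)=\sqrt{2j+1}\int_0^1 P_j(2x-1)e^{-i\zeta x}\d x$ shows that the coefficient of $e^{-i\zeta}$ is controlled by the endpoint derivatives $P_j^{(m)}(1)=\tfrac{(j+m)!}{2^m m!(j-m)!}$; matching against $\widehat\phi_j(\zeta)=\Phi_j(-i\zeta)$ gives $\abs{B_m^j}=\sqrt{2j+1}\,\tfrac{(j+m)!}{m!(j-m)!}$, which is consistent with \eqref{tilde_first_B_coefficients} and with the previous lemma ($A_m^j=(-1)^{1+j-m}B_m^j$). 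Next I would bound the factorial quotient: since $k\geqslant 0$, each of the $m+q$ denominator factors is at least $D:=3+j+p$, so that quotient is at most $D^{-(m+q)}$. Writing $T(m)=\tfrac{(j+m)!}{m!(j-m)!}$ and $T'(q)$ for the analogous quantity in $(p,q)$, these two reductions yield
\[
    \abs{C_{jp}^k}
    \leqslant
    2\sqrt{(2j+1)(2p+1)}
    \left[\sum_{m=0}^j \frac{T(m)}{D^m}\right]
    \left[\sum_{q=0}^p \frac{T'(q)}{D^q}\right].
\]

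Third comes the key combinatorial inequality $T(m)\leqslant 2^j(4j)^m$ for $0\leqslant m\leqslant j$. I would prove it by writing $\tfrac{(j+m)!}{(j-m)!}$ as the product of the $m$ factors $j+1,\dots,j+m$ (each $\leqslant 2j$) times the $m$ factors $j,j-1,\dots,j-m+1$ (each $\leqslant j$), giving $T(m)\leqslant \tfrac{(2j)^m j^m}{m!}=(4j)^m\,\tfrac{(j/2)^m}{m!}$; since $\tfrac{(j/2)^m}{m!}\leqslant e^{j/2}<2^j$ (because $\sqrt e<2$), the claim follows. Hence $\sum_m T(m)/D^m\leqslant 2^j S_j$ with $S_j=\sum_{m=0}^j(4j/D)^m$, and likewise $\sum_q T'(q)/D^q\leqslant 2^p S_p$, so $\abs{C_{jp}^k}\leqslant 2^{j+p+1}\sqrt{(2j+1)(2p+1)}\,S_j S_p$. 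It remains to evaluate the geometric sums: when $4j\neq D$, i.e. $p\neq 3(j-1)$, one has $S_j=\tfrac{(4j)^{j+1}-D^{j+1}}{D^{j}(4j-D)}$ with $4j-D=3(j-1)-p$, whereas $S_j=j+1$ exactly when $4j=D$; the symmetric dichotomy for $S_p$ (with $4p-D=3(p-1)-j$) reproduces precisely the three displayed cases, the degenerate ratios being exactly what force cases one and two to be stated separately.

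The main obstacle is the second step: obtaining the exact value (or an upper bound of the exact form) $\abs{B_m^j}=\sqrt{2j+1}\,\tfrac{(j+m)!}{m!(j-m)!}$, since the $B_m^j$ are only given through the recurrences \eqref{A_coefficients_recurrence} and \eqref{tilde_B_coefficients_recurrence}; everything after that is bookkeeping, and it is worth noting that the crude constant $2^{j+p+1}$ in the statement is not sharp, the slack arising solely from the estimate $\sqrt e<2$.
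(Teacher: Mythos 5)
Your proof is correct, and the outer architecture coincides with the paper's: both arguments bound the factorial quotient by $(3+j+p)^{-(m+q)}$, recognize the remaining double sum as a product of two finite geometric series in the ratios $4j/(3+j+p)$ and $4p/(3+j+p)$, and obtain the three displayed cases as exactly the degenerate ($4j=3+j+p$, i.e.\ $p=3(j-1)$, resp.\ $4p=3+j+p$) and generic evaluations of those sums. Where you genuinely diverge is in the key intermediate estimate $\vert \widetilde B_m^j\vert \leqslant 2^j\sqrt{2j+1}$. The paper proves it by induction directly on the recurrence \eqref{tilde_B_coefficients_recurrence}: it computes $\widetilde B_0^j=\sqrt{2j+1}$, handles the extreme indices $m=j-1,j$ separately, and majorizes the interior entries by an auxiliary array $M_z^j$ satisfying $M_z^{j+2}\leqslant \max_z M_z^{j+1}+\max_z M_z^j\leqslant 2^j$. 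You instead extract the exact value $\vert B_m^j\vert=\sqrt{2j+1}\,\tfrac{(j+m)!}{m!\,(j-m)!}$ from repeated integration by parts (the $e^{-i\zeta}$ coefficients being the endpoint derivatives $P_j^{(m)}(1)$) and then get the same bound from the elementary estimate $(j/2)^m/m!\leqslant e^{j/2}<2^j$. Your route costs one extra derivation that must be written out carefully (the closed form, which the paper never needs), but it buys more: it reproves the sign relation $A_m^j=(-1)^{1+j-m}B_m^j$ of the preceding lemma for free, it makes transparent that the constant $2^{j+p+1}$ is not sharp (one could keep $e^{(j+p)/2}$ or even $\max_m T(m)/(4j)^m$), and it verifies consistency with \eqref{tilde_first_B_coefficients}. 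Both arguments land on the identical final bound via \eqref{cross_correlation_coefficients}.
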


\begin{proof}

We start by showing that
\[
    \max_{m = 0, 1, \ldots, j}
    \left| \widetilde B_m^j \right|
    \leqslant
    2^j \sqrt{2j + 1}
\]
for each $j \in \mathbb N_0$.
Indeed, $\widetilde B_0^j$ can be calculated directly
by the first line in the recurrence relation \eqref{tilde_B_coefficients_recurrence}
with the first coefficients \eqref{tilde_first_B_coefficients}.
It yields
\[
    \widetilde B_0^j
    =
    \sqrt{2j + 1}
    , \quad
    j \in \mathbb N_0
    .
\]
The last two lines in \eqref{tilde_B_coefficients_recurrence} give
\[
    \left| \widetilde B_{j - 1}^j \right|
    \leqslant
    \left| \widetilde B_0^1 \right|
    , \quad
    \left| \widetilde B_j^j \right|
    \leqslant
    \left| \widetilde B_1^1 \right|
    , \quad
    j \in \mathbb N
    .
\]
In particular,
\[
    \left| \widetilde B_{j - 1}^j \right|
    ,
    \left| \widetilde B_j^j \right|
    \leqslant
    \sqrt{2j + 1}
    , \quad
    j \in \mathbb N
    .
\]
Now let us consider an array $M_z^j$ with
\(
    z = 0, \ldots, j
\)
and
\(
    j \in \mathbb N_0
\)
satisfying
\[
    M_z^{j + 1}
    =
    \left( \frac{ j - 1 }{ j + 1 } \right)^z
    M_z^{j - 1}
    +
    \frac {2j + 1}{2j + 2}
    \left( \frac j{ j + 1 } \right)^{z - 1}
    M_{z - 1}^j
    , \quad
    z = 1, \ldots, j - 1
    , \quad
    j \geqslant 2
    ,
\]
where
\[
    M_0^j = M_j^j = M_j^{j + 1} = 1
    , \quad
    j \in \mathbb N_0
    .
\]
Then it is easy to see that
\[
    \left| \widetilde B_z^j \right|
    \leqslant
    M_z^j
    \sqrt{2j + 1}
\]
holds true for all admissible pairs $z, j$.
These new parameters can be roughly estimated by
\[
    M_z^{j + 2}
    \leqslant
    \max_{z = 0, \ldots, j + 1}
    M_z^{j + 1}
    +
    \max_{z = 0, \ldots, j}
    M_z^j
    \leqslant
    2^j
    , \quad
    z = 1, \ldots, j
    , \quad
    j \geqslant 0
    ,
\]
which leads to the bound for $\widetilde B_m^j$ claimed above at the beginning of the proof.
We finish now by noticing
\begin{equation*}
    \sum_{m=0}^j
    \sum_{q=0}^p
    \frac{ (k + 2 + j + p)! (4j)^m (4p)^q }{ (k + 2 + j + p + m + q)! }
    \leqslant
    \sum_{m=0}^j
    \sum_{q=0}^p
    \frac{ (4j)^m (4p)^q }{ (3 + j + p)^{m + q} }
\end{equation*}
that is a multiplication of two finite geometric series.
Summing these geometric series and using the bound for $\widetilde B_m^j$
we conclude the proof by \eqref{cross_correlation_coefficients}.

\end{proof}

Now we can make a use of the obtained power series for
\(
    \widehat \phi_j (\zeta)
    \widehat \phi_p (-\zeta)
\)
standing in \eqref{sigma_contour_matrix_a}.
The contour $\Gamma$ is parameterized by
\(
    \zeta = \rho e^{- i \pi / 4}
    ,
    \rho \in \mathbb R
    .
\)
One can change the integration variable, and then exchange
integration and summation by appealing to the dominated convergence theorem.
Thus our main operator \eqref{sigma_contour_matrix_a}
has the following expansion
\begin{equation}
\label{operator_correlation_expansion}
    \left[ \sigma_l^n \right]_{pj}
    (a)
    =
    \sum_{k = 0}^{\infty}
    C_{jp}^{2k}
    J_{2k + j + p}(l, a)
    ,
\end{equation}
where $a = t 4^n$
and
\begin{equation}
\label{power_integral}
    J_m
    =
    \frac
    {
        e^{ i \pi (m - 1) / 4 }
    }
    {
        2 \pi ( m + 2 )!
    }
    \int_{\mathbb R}
    \exp
    \left(
        \rho l \exp \left( i \frac \pi 4 \right) - a \rho^2
    \right)
    \rho^m
    d \rho
\end{equation}
satisfying the following relation
\begin{equation}
\label{power_integral_recursion}
    J_{m+1}
    =
    \frac
    {
        i
    }
    {
        2a (m + 3)
    }
    \left(
        l
        J_m
        +
        \frac {m}{(m + 2)}
        J_{m-1}
    \right)
    , \quad
    m = 0, 1, 2, \ldots,
\end{equation}
with $J_{-1} = 0$ and
\begin{equation}
\label{power_integral_0}
    J_0
    =
    \frac{ e^{ -i \frac{\pi}4 } }{ 4 \sqrt{ \pi a } }
    \exp
    \left(
        \frac{il^2}{4a}
    \right)
    .
\end{equation}
The zero-power integral $J_0$ is standard
and the rest are obtained from it via integration by parts.
The expansion \eqref{operator_correlation_expansion} turns out to be very efficient
as long as the scaling order $n$ is big enough for the chosen time step $t$: the smaller the time $t$ the bigger $n$ should be,
in order to avoid that $|J_m|$ become so large that it will affect accuracy due to rounding errors.
For example, $n$ should be bigger than $12$ for $t = 10^{-5}$.

Using this recurrence formula we can prove some very useful
error estimates.
It is worth to point out that
\(
    C_{jp}^k \in \ell^{\infty}( \mathbb N_0 )
    ,
\)
therefore, it makes sense to estimate the sequence of power integrals
in
\(
    \ell^1( \mathbb N_0 )
    .
\)
We note that $J_m \to 0$  as $m \to \infty$, since the series converges.
Appealing repetitively to the recurrence relation we can show that $J_m$ tends to zero faster than any power of $1/m$.
Denoting $b_m = |J_m|$ we can notice that 
%
%
\begin{equation*}
\begin{aligned}
    b_{m+1}
    &\leqslant
    \frac
    {
        |l|
    }
    {
        2a (m + 3)
    }
    b_m
    +
    \frac m{2a(m + 2)(m + 3)} b_{m-1}
    \\
    b_{m+2}
    &\leqslant
    \frac
    {
        |l|
    }
    {
        2a (m + 4)
    }
    b_{m+1}
    +
    \frac {m + 1}{2a(m + 3)(m + 4)} b_m
    \\
    b_{m+3}
    &\leqslant
    \frac
    {
        |l|
    }
    {
        2a (m + 5)
    }
    b_{m+2}
    +
    \frac {m + 2}{2a(m + 4)(m + 5)} b_{m+1}
    \\
    &\ldots \qquad \ldots \qquad \ldots \qquad \ldots \qquad \ldots
\end{aligned}
\end{equation*}
Summing these inequalities one obtains
\begin{multline*}
    b_{m+1} + b_{m+2} + \ldots
    \leqslant
    \frac m{2a(m + 2)(m + 3)} b_{m-1}
    +
    \frac 1{2a(m + 3)}
    \left(
        |l| + 1 - \frac 3{m + 4}
    \right)
    b_m
    \\
    +
    \frac 1{2a(m + 4)}
    \left(
        |l| + 1 - \frac 3{m + 5}
    \right)
    b_{m+1}
    +
    \ldots
\end{multline*}
On the right-hand side, the coefficients standing in front of $b_{m + 1}, b_{m + 2}, \ldots$
form a decreasing sequence
(strictly decreasing for either $m \geqslant 1$ or $l \ne 0$),
and so we can take them out of the brackets.
Moreover, for
\(
    m
    \geqslant
    \frac{|l| + 1}{2a} - \frac 92
    +
    \sqrt{
        \left(
            \frac{|l| + 1}{2a} + \frac 12
        \right) ^2
        -
        \frac 3a
    }
    ,
\)
or any $m$ in case the value under the square root is negative,
we have
\begin{equation*}
    b_{m+1} + b_{m+2} + \ldots
    \leqslant
    \frac m{a(m + 2)(m + 3)} b_{m-1}
    +
    \frac 1{a(m + 3)}
    \left(
        |l| + 1 - \frac 3{m + 4}
    \right)
    b_m
\end{equation*}
that can be used in practice for series termination.
We can even be more specific, since we only need to
sum $b_m$ with either even or odd indices $m$,
as follows
\begin{multline*}
    b_{m+1} + b_{m+3} + \ldots
    \leqslant
    \frac m{2a(m + 2)(m + 3)} b_{m-1}
    +
    \frac {|l|}{2a(m + 3)}
    b_m
    \\
    +
    \frac {m + 2}{2a(m + 4)(m + 5)} b_{m+1}
    +
    \frac
    {
        |l|
    }
    {
        2a (m + 5)
    }
    b_{m+2}
    +
    \ldots
    \leqslant
    \frac m{2a(m + 2)(m + 3)} b_{m-1}
    +
    \frac {|l|}{2a(m + 3)}
    b_m
    \\
    +
    \frac
    {
        |l| + 1
    }
    {
        2a (m + 5)
    }
    \left[
    \frac m{a(m + 2)(m + 3)} b_{m-1}
    +
    \frac 1{a(m + 3)}
    \left(
        |l| + 1 - \frac 3{m + 4}
    \right)
    b_m
    \right]
    .
\end{multline*}
These inequalities may be exploited to
cut the infinite sum in \eqref{operator_correlation_expansion} in practical calculations, once the requested precision is set.
In principle,
we could proceed further and try to obtain an optimal precision-dependent expression to determine how to terminate the series, though for big $|l|$ we anticipate the series cut to be unnecessarily large.
Moreover,
the bound in Lemma \ref{cross_correlation_coefficients_bound_lema}
is in practice very rough,
due to very fast convergence of the integral series.
In most practical calculations it is enough to know that $|C_{jp}^k| < 10$ for $k \leqslant 50$.
\begin{figure}[ht!]
\centering
\begin{subfigure}{0.49\textwidth}
\includegraphics[width=\textwidth]{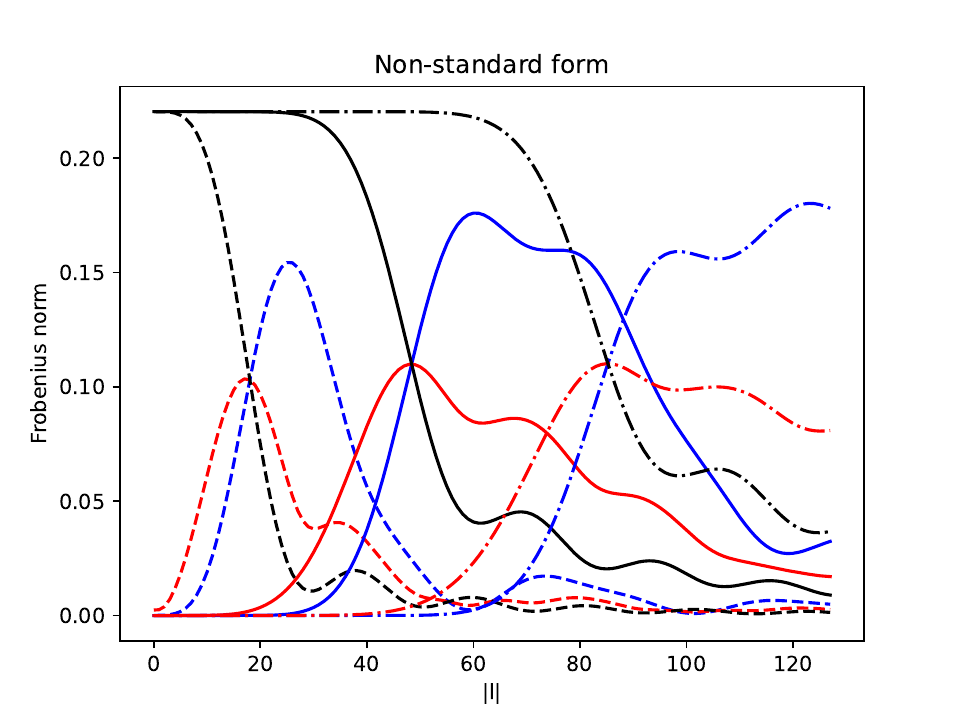}
\caption{$t = 0.0001$, $n=7$}\label{subfig:t0001-n7}
\end{subfigure}
\hfill
\begin{subfigure}{0.49\textwidth}
\includegraphics[width=\textwidth]{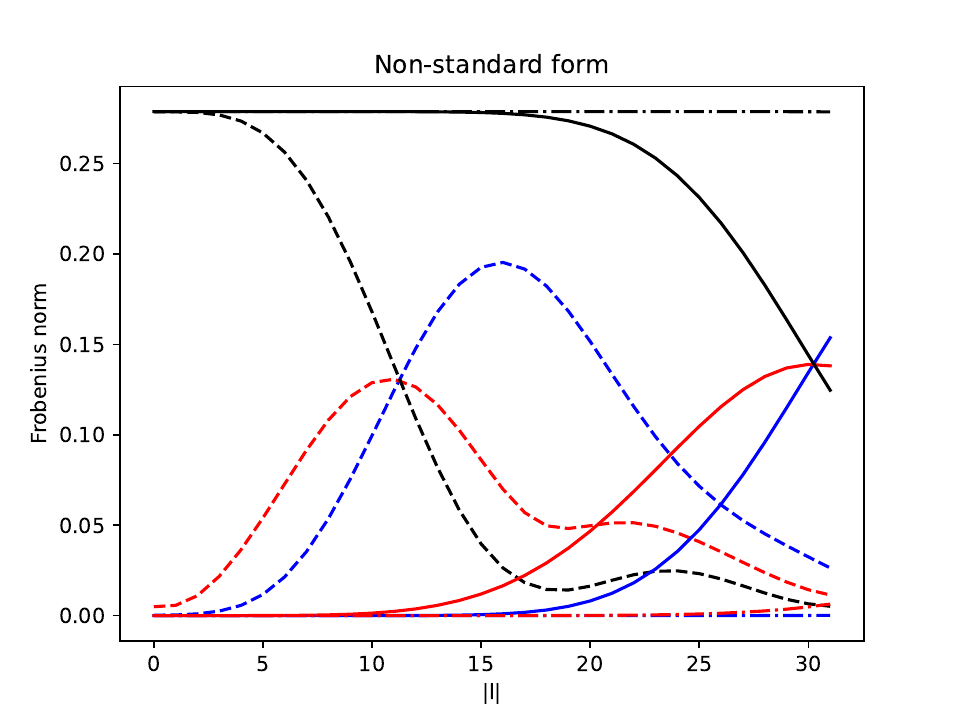}
\caption{$t = 0.001$, $n=5$}\label{subfig:t001-n6}
\end{subfigure}
\hfill
\begin{subfigure}{0.49\textwidth}
\includegraphics[width=\textwidth]{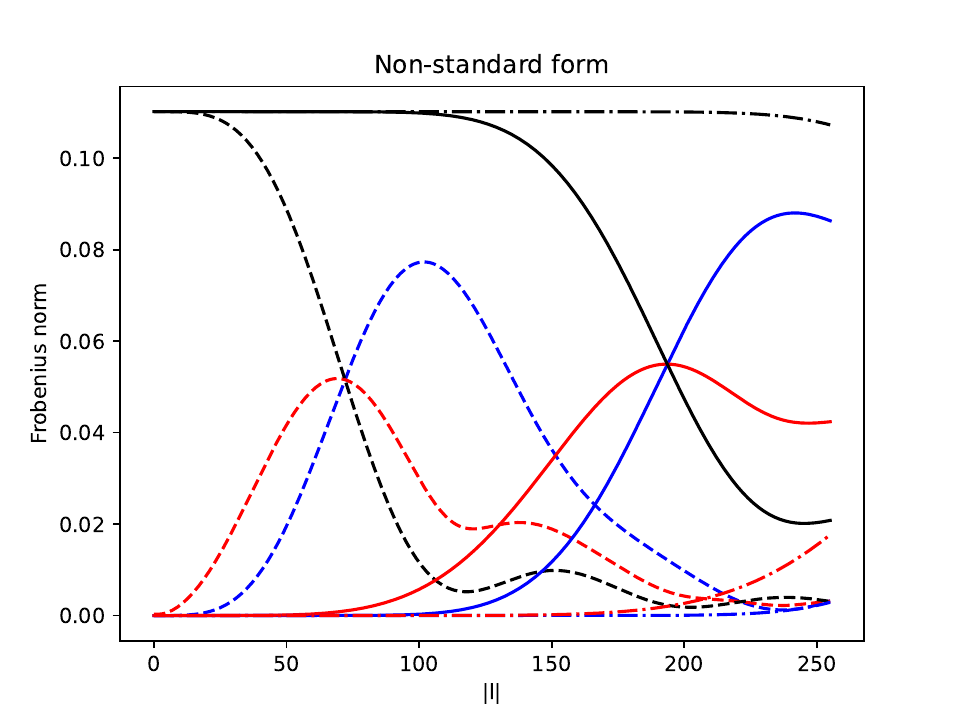}
\caption{$t = 0.0001$, $n=8$}\label{subfig:t0001-n8}
\end{subfigure}
\hfill
\begin{subfigure}{0.49\textwidth}
\includegraphics[width=\textwidth]{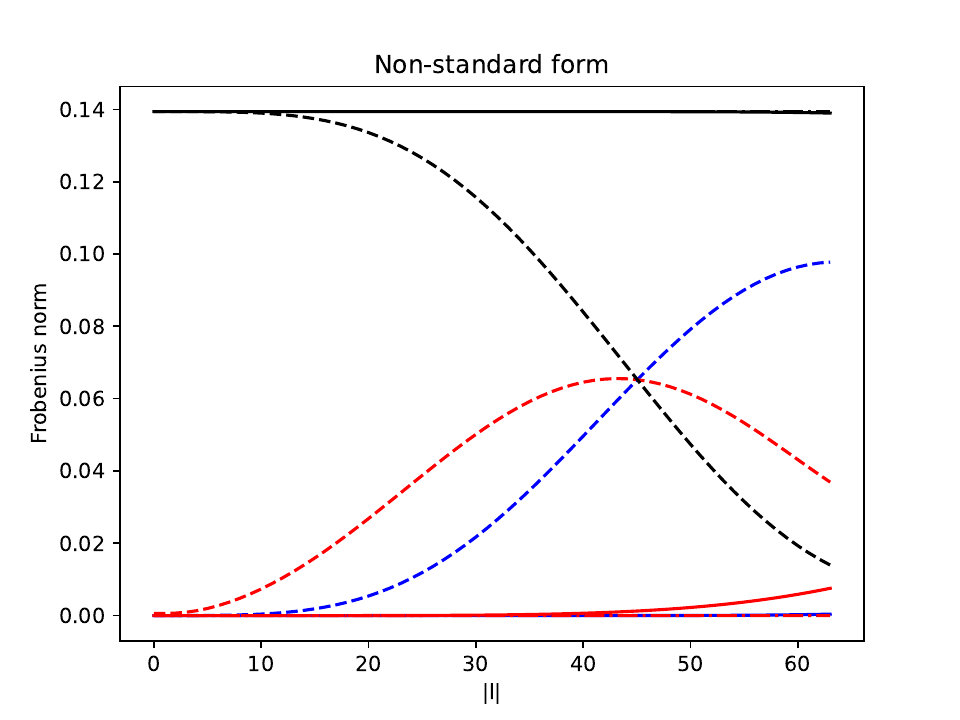}
\caption{$t = 0.001$, $n=6$}\label{subfig:t001-n7}
\end{subfigure}
\hfill
\begin{subfigure}{0.49\textwidth}
\includegraphics[width=\textwidth]{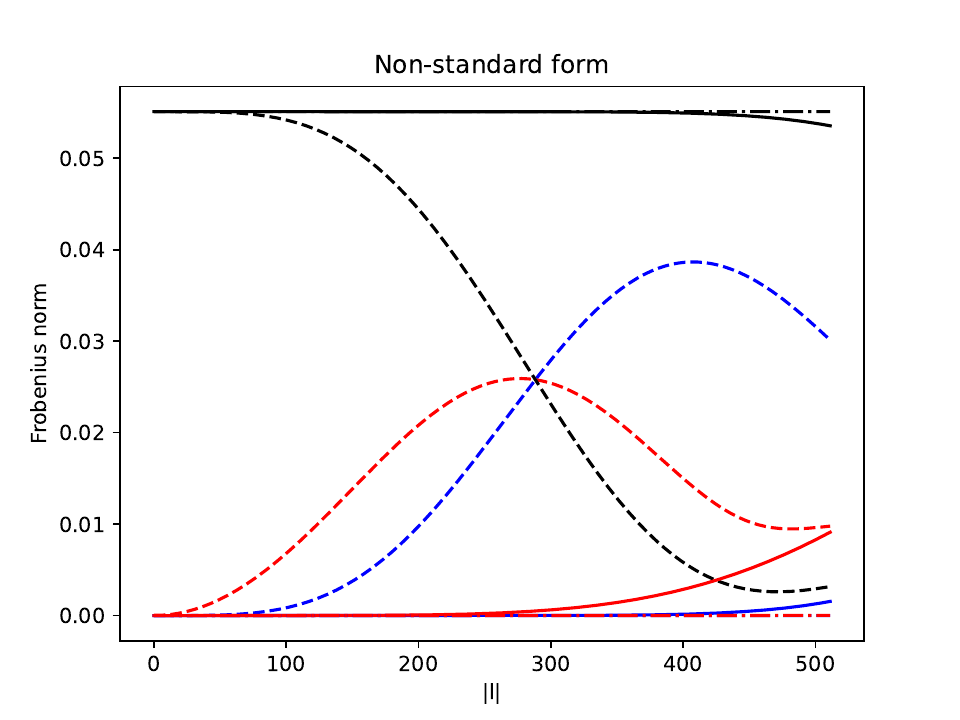}
\caption{$t = 0.0001$, $n=9$}\label{subfig:t0001-n9}
\end{subfigure}
\hfill
\begin{subfigure}{0.49\textwidth}
\includegraphics[width=\textwidth]{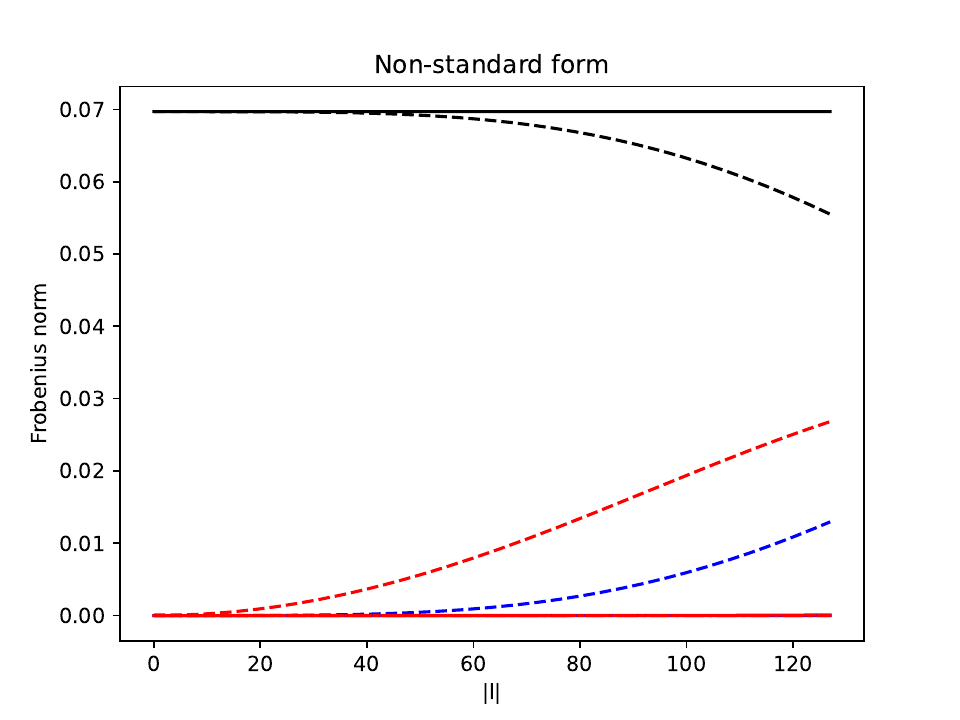}
\caption{$t = 0.001$, $n=7$}\label{subfig:t001-n8}
\end{subfigure}
\hfill
	\caption
	{
		Sparsity of the non-standard form for selected choices of scales and time steps. Left panels show results for time step $t = 0.0001$ and scale from $n=7$ (top) to $n=9$ (bottom). Right panels show results for time step $t = 0.001$ and scale from $n=5$ (top) to $n=7$ (bottom). Each color represent a different component of the \ac{NS}-form: $\norm{ \alpha_l }$ is blue, $\norm{ \beta_l }$ is red and $\norm{ \sigma_l }$ is black. Solid lines correspond to $\mathfrak k = 6$, dashed lines to $\mathfrak k = 2$ and dash-dot lines to $\mathfrak k = 11$.}\label{fig:ns-form-norms}
\end{figure}

\begin{table}[!ht]
    \centering
    \begin{tabular}{ccc|ccc|ccc}
    $t$ & $n$ & $\mathfrak k$ & $\norm{\alpha_0}$ & $\norm{\beta_0}$ & $\norm{\sigma_0}$ & $\norm{ \alpha_{2^n - 1} }$ & $\norm{ \beta_{2^n - 1} }$ & $\norm{ \sigma_{2^n - 1} }$ \\
    \hline
0.0001 & 7 & 2  & 8.0e-05 & 2.4e-03 & 0.22 & 5.0e-03 & 2.6e-03 & 1.3e-3  \\ 
0.0001 & 7 & 6  & 3.2e-13 & 3.9e-08 & 0.22 & 3.2e-02 & 1.7e-02 & 9.0e-3  \\ 
0.0001 & 7 & 11 & 2.7e-18 & 4.0e-15 & 0.22 & 1.8e-01 & 8.1e-02 & 0.037   \\ 
\hline
0.0001 & 8 & 2  & 2.5e-06 & 3.0e-04 & 0.11 & 3.2e-03 & 3.2e-03 & 3.1e-3  \\ 
0.0001 & 8 & 6  & 3.9e-17 & 3.1e-10 & 0.11 & 8.6e-02 & 4.2e-02 & 0.021   \\ 
0.0001 & 8 & 11 & 5.2e-19 & 3.5e-18 & 0.11 & 2.9e-03 & 1.8e-02 & 0.11    \\ 
\hline
0.0001 & 9 & 2  & 7.8e-08 & 3.8e-05 & 0.055 & 3.0e-02 & 9.8e-03 & 3.2e-3 \\ 
0.0001 & 9 & 6  & 5.3e-19 & 2.4e-12 & 0.055 & 1.6e-03 & 9.1e-03 & 0.054  \\ 
0.0001 & 9 & 11 & 3.2e-19 & 1.5e-18 & 0.055 & 6.8e-09 & 1.9e-05 & 0.055  \\      
\hline
\hline
0.001  & 5 & 2  & 2.6e-04 & 4.9e-03 & 0.28 & 2.6e-02 & 1.1e-02 & 5.2     \\ 
0.001  & 5 & 6  & 6.8e-12 & 2.0e-07 & 0.28 & 1.5e-01 & 1.4e-01 & 0.12    \\ 
0.001  & 5 & 11 & 2.4e-18 & 8.5e-14 & 0.28 & 1.5e-04 & 6.4e-03 & 0.28    \\ 
\hline
0.001  & 6 & 2  & 8.1e-06 & 6.1e-04 & 0.14 & 9.8e-02 & 3.7e-02 & 0.014   \\ 
0.001  & 6 & 6  & 8.3e-16 & 1.6e-09 & 0.14 & 4.1e-04 & 7.5e-03 & 0.14    \\ 
0.001  & 6 & 11 & 6.3e-19 & 1.1e-17 & 0.14 & 1.4e-10 & 4.3e-06 & 0.14    \\ 
\hline
0.001  & 7 & 2  & 2.5e-07 & 7.7e-05 & 0.070 & 1.3e-02 & 2.7e-02 & 0.056  \\ 
0.001  & 7 & 6  & 4.2e-19 & 1.2e-11 & 0.070 & 1.1e-07 & 8.8e-05 & 0.070  \\ 
0.001  & 7 & 11 & 2.0e-19 & 1.8e-18 & 0.070 & 3.0e-17 & 1.4e-09 & 0.070  \\ 
    \end{tabular}
    \caption{
        Values of diagonal elements ($|l|=0$) and corner elements ($|l|=2^n-1$)
        for the $\norm{ \alpha_l }$, $\norm{ \beta_l }$ and $\norm{ \sigma_l }$
        components of the \ac{NS} form for selected time steps $t$, scales $n$
        and polynomial order $\mathfrak k$.
    }
    \label{tab:ns-form-norms}
\end{table}

We now turn our attention to the sparsity of the
matrices
\eqref{alpha_contour_matrix}-\eqref{gamma_contour_matrix}
associated with
the nonstandard form \eqref{nonstandard_form}
by means of
\eqref{nonstandard_form_restrictions}, \eqref{nonstandard_form_matrices}.
The matrices
\(
    \sigma_l^{n + 1}
    ,
\) associated with the scaling functions, can be obtained 
using the expansion \eqref{operator_correlation_expansion}, whereas the matrices
\(
    \alpha_l^n, \beta_l^n, \gamma_l^n
\)
can be computed exploiting \eqref{nonstandard_form_convolution_decomposition_step}.
The Frobenius norm
\begin{equation}
\label{Frobenius_norm}
    \norm{ \alpha }
    =
    \left(
        \sum_{p, j = 0}^{\mathfrak k - 1}
        \left|
            \left[ \alpha \right]_{pj}
        \right|^2
    \right)^{1/2}
\end{equation}
is used for a matrix $\alpha$ of $\mathfrak k \times \mathfrak k$-size.
In Figure \ref{fig:ns-form-norms} we have displayed the norm of the different components of the \ac{NS} form as a function of the distance between two nodes $|l|$ for a selection of scales $n$, time steps $t$ and polynomial orders $\mathfrak k$. The values of the norms for $|l| = 0$ (diagonal elements) and $|l| = 2^n-1$ (corner elements) are reported in Table~\ref{tab:ns-form-norms}. These results show that away from the main diagonal $\norm{ \alpha_l^n }$ and $\norm{ \beta_l^n }$ are significant, whereas they become negligible closer to the main diagonal $l = 0$.
This is in contrast with the fact that the \ac{NS}-form matrices of the heat equation
exhibit a narrow-bounded structure along the main diagonal, as mentioned in Subsection~\ref{Convolution_representation_on_the_real_line}.
Intuitively, this difference can be explained as follows.
The heat defuses with time, that is, it transports away from the initial local perturbation point
with some smearing.
Waves described by the Schrödinger equation \eqref{free_particle_schrodinger} disperse.
Such a wave moves away from the initial local perturbation point
forming high frequency ripples that are captured by multiresolution analysis.

Alternatively, the values on the diagonal and away could be roughly estimated as follows.
Due to the vanishing moment property \eqref{wavelet_vanishing_moment},
$\widehat \psi_j(\zeta)$
has a root at zero of order $\mathfrak k + j$.
It results in the fact,
that the corresponding expansions, similar to \eqref{operator_correlation_expansion},
for the matrix elements
\(
    \left[ \alpha_l^n \right]_{pj}
\)
and
\(
    \left[ \beta_l^n \right]_{pj}
\)
start with integrals $J_{2 \mathfrak k + p + j}$
and $J_{\mathfrak k + p + j}$, respectively.
These integrals constitute the leading terms, due to their fast convergence to zero.
Moreover, the corner elements
\(
    \left[ \alpha_l^n \right]_{00}
\)
and
\(
    \left[ \beta_l^n \right]_{00}
\)
are dominant for the same reason.
Therefore,
\(
    \norm{ \alpha_l^n }
    \sim
    | J_{ 2 \mathfrak k }(l) |
\)
and
\(
    \norm{ \beta_l^n }
    \sim
    | J_{ \mathfrak k }(l) |
    .
\)

For $l = 0$ from \eqref{power_integral_recursion} one obtains
\begin{equation}
\label{power_integral_on_diagonal}
    J_{2k}(0)
    =
    \frac
    {
        e^{ i \pi (2k - 1) / 4 }
    }
    {
        2 \sqrt{\pi a} (2a)^k ( 2k + 2 )!! ( 2k + 1 )
    }
    , \quad
    J_{2k + 1}(0) = 0
    .
\end{equation}
In the case of $|l| \gg 1$,
we can admit the following approximation
\begin{equation*}
    J_{m+1}(l)
    \approx
    \frac
    {
        il
    }
    {
        2a (m + 3)
    }
    J_m(l)
\end{equation*}
and so
\begin{equation}
\label{power_integral_outside_diagonal}
    J_m(l)
    \approx
    \left(
        \frac {il}{2a}
    \right)^m
    \frac
    {
        \exp
        \left(
            \frac{il^2}{4a}
            -
            \frac{i\pi}4
        \right)
    }
    {
        2 \sqrt{\pi a}
        (m + 2)!
    }
\end{equation}
which leads to
\[
    \norm{ \alpha_l^n }
    \lesssim
    \frac 1{ 2 \sqrt{\pi a} (2 \mathfrak k + 2)! }
    \left(
        \frac l{2a}
    \right)^{ 2 \mathfrak k }
    , \quad
    |l| \gg 1
\]
Meanwhile,
\(
    \norm{ \alpha_0^n }
    \sim
    | J_{ 2 \mathfrak k }(0) |
    ,
\)
where the integral is calculated in \eqref{power_integral_on_diagonal}.
Thus
\[
    \norm{ \alpha_l^n }
    \lesssim
    \frac
    {
        1
    }
    {
        2 \sqrt{\pi a} (2a)^{\mathfrak k}
        ( 2 \mathfrak k + 2 )!! ( 2 \mathfrak k + 1 )
    }
    +
    \frac 1{ 2 \sqrt{\pi a} (2 \mathfrak k + 2)! }
    \left(
        \frac l{2a}
    \right)^{ 2 \mathfrak k }
\]
Similarly,
\[
    \norm{ \beta_l^n }
    =
    \norm{ \gamma_l^n }
    \lesssim
    \frac
    {
        1
    }
    {
        2 \sqrt{\pi a} (2a)^{\mathfrak k / 2}
        ( \mathfrak k + 2 )!! ( \mathfrak k + 1 )
    }
    +
    \frac 1{ 2 \sqrt{\pi a} (\mathfrak k + 2)! }
    \left(
        \frac l{2a}
    \right)^{ \mathfrak k }
    , \quad
    \text{provided $\mathfrak k$ is even}
\]
and
\[
    \norm{ \beta_l^n }
    =
    \norm{ \gamma_l^n }
    \lesssim
    \frac
    {
        1
    }
    {
        2 \sqrt{\pi a} (2a)^{(\mathfrak k + 1) / 2}
        ( \mathfrak k + 3 )!! ( \mathfrak k + 2 )
    }
    +
    \frac 1{ 2 \sqrt{\pi a} (\mathfrak k + 2)! }
    \left(
        \frac l{2a}
    \right)^{ \mathfrak k }
    , \quad
    \text{provided $\mathfrak k$ is odd.}
\]
Note that the expansions for matrices $\alpha_l^n$, $\beta_l^n$
with respect to the power integrals contain cross correlation coefficients
depending on the MRA order $\mathfrak k$.
Therefore, the implicit constants staying in the obtained inequalities for
\(
    \norm{ \alpha_l^n }
    ,
    \norm{ \beta_l^n }
\)
depend on $\mathfrak k$ as well.
In other words,
these inequalities provide only a qualitative behaviour of the norms.
For instance,
one can compare
\[
    | J_{ 2 \mathfrak k }(0) |
    =
    \left \{
        \begin{aligned}
            \text{8.6e-05, } \quad \mathfrak k = 2
            \\
            \text{2.1e-11, } \quad \mathfrak k = 6
            \\
            \text{1.0e-20,} \quad \mathfrak k = 11
        \end{aligned}
    \right.
\]
for $n = 7$ and $t = 0.0001$ with the values
\(
    \norm{ \alpha_0^n }
\)
reported in Table~\ref{tab:ns-form-norms}.

\subsection{Interpolating scaling functions}\label{sec:interpolating}

Another common choice of polynomial basis is constituted by the interpolating scaling functions.
Similarly to the previous case
we introduce the functions
\(
    \Psi_j(-i \zeta)
    =
    \widehat \varphi_j(\zeta)
    ,
\)
where $\varphi_j$ are given by \eqref{interpolating_scaling_functions},
so that
\[
    \Psi_j(x)
    =
    \sqrt{ w_j }
    \sum_{m = 0}^{\mathfrak k - 1}
    \phi_m(x_j) \Phi_m(x)
    , \quad
    j = 0, \ldots, \mathfrak k - 1
    ,
\]
where $\phi_m$ are the Legendre scaling functions.
Here $x_0, \ldots, x_{\mathfrak k - 1}$ denote the roots of $P_{\mathfrak k}(2x - 1)$.
We are interested in the following combination
\begin{equation*}
    \Psi_j (-x)
    \Psi_p (x)
    =
    \sqrt{ w_j w_p }
    \sum_{j', p' = 0}^{\mathfrak k - 1}
    \phi_{j'}(x_j) \phi_{p'}(x_p) \Phi_{j'}(-x) \Phi_{p'}(x)
\end{equation*}
which can be expressed as the series
\begin{equation*}
    \Psi_j (-x)
    \Psi_p (x)
    =
    \sum_{k = 0}^{\infty}
    \frac{ D_{jp}^k x^k }{ (k + 2)! }
    ,
\end{equation*}
where
\begin{equation*}
    D_{jp}^k
    =
    \sqrt{ w_j w_p }
    \sum_{
        \substack
        {
            0 \leqslant j', p' \leqslant \mathfrak k - 1
            \\
            j' + p' \leqslant k
        }
    }
    \phi_{j'}(x_j) \phi_{p'}(x_p) C_{j'p'}^{k - j' - p'}
    , \quad
    j, p = 0, \ldots, \mathfrak k - 1
    .
\end{equation*}
Note that this double sum can be restricted to
\(
    k - j' - p'
\)
being even and non-negative.
Thus the time evolution operator in the interpolating basis has the following expansion
\begin{equation}
    \left[ \sigma_l^n \right]_{pj}
    (a)
    =
    \sum_{k = 0}^{\infty}
    D_{jp}^k
    J_k(l, a)
    ,
\end{equation}
where $a = t 4^n$ as above for the Legendre basis.

Calculating matrices
\(
    \alpha_l^n, \beta_l^n, \gamma_l^n
\)
from
\(
    \sigma_l^{n + 1}
\)
by \eqref{nonstandard_form_convolution_decomposition_step}
and then evaluating the norms
\(
    \norm{\alpha_l^n}
    ,
    \norm{\beta_l^n}
\)
and
\(
    \norm{\gamma_l^n}
    ,
\)
one arrives to the same qualitative results as in the previous subsection.
Moreover, the interpolating basis spans the same space
and the norms of the operator therefore do not change,
and as a result matrix norms may change insignificantly.
Therefore, we omit the illustration of the sparsity and refer to Figure~\ref{fig:ns-form-norms} for details.

\subsection{Haar multiresolution analysis}
\label{Haar_multiresolution_analysis_subsection}

The above conclusions simplify significantly in the case $\mathfrak k = 1$.
By \eqref{sigma_contour_matrix_a}, \eqref{Fourier_Legendre_0} we have
\begin{equation*}
    \left[ \sigma_l^{n} \right]_{00}
    =
    \int_{\Gamma}
    F (\zeta)
    d\zeta
    , \quad
    F(\zeta)
    =
    \frac{  1 - \cos \zeta }{ \pi \zeta^2 }
    \exp
    \left(
        i \zeta l - it N^2 \zeta^2
    \right)
    .
\end{equation*}
Let us first justify \eqref{H_bound},
so we regard $\zeta$ lying in the band $| \im \zeta | \leqslant H$
and satisfying $\re \zeta \cdot \im \zeta \leqslant 0$,
then
\[
    \Abs{F(\zeta)}
    \leqslant
    \frac 2{\pi}
    e^{H(|l| + 1)}
    \leqslant
    \frac 2{\pi}
    e^{HN}
    .
\]
In order to maintain accuracy $\varepsilon$, while discretizing the integral in such domain,
the right-hand side of this bound should not be bigger than $\varepsilon / \varepsilon_{\text{mach}}$.
Thus we obtained \eqref{H_bound}.

Now we have only even-power integrals in the expansion \eqref{operator_correlation_expansion},
\begin{equation}
\label{haar_sigma}
    \left[ \sigma_l^{n} \right]_{00}
    =
    2
    \sum_{k = 0}^{\infty}
    J_{2k}(l, a)
    ,
\end{equation}
since $C_{00}^k = 1 + (-1)^k$ by \eqref{cross_correlation_coefficients},
that can also be checked directly by expanding $\cos \zeta$ about zero.
The convergence rate of this series was analysed
in Subsection~\ref{Legendre_scaling_functions_subsection}.

Finally, let us have a closer look at the dependence of $\left[ \sigma_l^{n} \right]_{00}$
on the distance $|l|$ to the diagonal.
In the case of $l = 0$, one can calculate the sum of the series
by means of special functions as
\begin{equation}
\label{haar_sigma_on_diagonal}
    \left[ \sigma_0^{n} \right]_{00}
    =
    - i \sqrt{ \frac 2{\pi} }
    F_1
    \left(
        \frac{ e^{ \frac{i\pi}4 } }{ \sqrt{2a} }
    \right)
    ,
\end{equation}
where
\begin{equation*}
    F_1(x)
    =
    \sum_{k = 0}^{\infty}
    \frac
    {
        x^{ 2k + 1 }
    }
    {
        ( 2k + 1 )( 2k + 2 )!!
    }
    =
    \frac{1 - e^{x^2/2}}x
    -
    i \sqrt{ \frac {\pi}2 } \erf \left( \frac{ix}{\sqrt 2} \right)
    ,
\end{equation*}
using \eqref{power_integral_on_diagonal}.
The series
\(
    F_1
    \left(
        \frac{ e^{ \frac{i\pi}4 } }{ \sqrt{2a} }
    \right)
\)
is very useful in practice for the precision control,
since its real and imaginary
parts are alternating series with monotonically decreasing coefficients.

In the case of $|l| \gg 1$,
we have \eqref{power_integral_outside_diagonal}
which leads to
\begin{equation}
\label{haar_sigma_outside_diagonal}
    \left[ \sigma_l^{n} \right]_{00}
    \approx
    \frac 1{ \sqrt{ \pi a } }
    \left(
        \frac {2a}l
    \right)^2
    \left(
        1 - \cos \frac l{2a}
    \right)
    \exp
    \left(
        \frac{il^2}{4a} - \frac{i\pi}4
    \right)
    .
\end{equation}
Here the term in front of the exponent changes slowly with $l$,
whereas the exponent itself oscillates very fast.
This formula describes the operator matrix with a very high precision
for any $l \ne 0$.
In fact, the limit of this expression as $l/{2a} \to 0$ is identical to
$\left[ \sigma_0^{n} \right]_{00}$ evaluated by \eqref{haar_sigma_on_diagonal}
with the first order approximation $F_1(x) \approx x / 2$.
Moreover,
\eqref{haar_sigma_on_diagonal} and \eqref{haar_sigma_outside_diagonal}
give us an idea of how many terms are important in Expansion \eqref{haar_sigma}
for a given precision.
The same series cut is reliable in practise for evaluation of
the general expansion \eqref{operator_correlation_expansion}.

\section*{Declarations}

The CRediT taxonomy of contributor roles \cite{Allen2014-rd,Brand2015-qc} is applied.
The ``Investigation'' role also includes the ``Methodology'', ``Software'', and ``Validation'' roles.
The ``Analysis'' role also includes the ``Formal analysis'' and ``Visualization'' roles.
The ``Funding acquisition'' role also includes the ``Resources'' role.
The contributor roles are visualized in the following authorship attribution matrix,
as suggested in Ref.~\cite{authorship-attribution}.

\begin{table}[ht]
\caption{Levels of contribution: \textcolor{blue!100}{major}, \textcolor{blue!25}{minor}.}
\label{tbl:contribs}
\begin{tabular}{lccc}
\hline\hline
                              & ED     & YZ     & LF     \\ \hline
    Conceptualization         & \major &        & \minor \\ 
    Investigation             & \major &        &        \\ 
    Data curation             & \major &        &        \\ 
    Analysis                  & \major & \minor &        \\ 
    Supervision               & \major &        &        \\ 
    Writing -- original draft & \major &        &        \\
    Writing -- revisions      & \major &        & \minor \\
    Funding acquisition       & \minor &        & \major \\ 
    Project administration    & \major &        &        \\ \hline\hline
\end{tabular}
\end{table}



\vskip 0.05in
\noindent
{\bf Acknowledgments.}
{
    The authors are grateful to
    S. R. Jensen and Ch. Tantardini for numerous helpful discussions.
We acknowledge support from the Research Council of Norway through its Centres of Excellence scheme (Hylleraas centre, 262695), through the FRIPRO grant ReMRChem (324590),  and from NOTUR -- The Norwegian Metacenter for Computational Science through grant of computer time (nn14654k).
}

\bibliographystyle{acm}
\bibliography{bibliography}

\end{document}